\newcommand{\citep}{\cite}
\newcommand{\xmark}{\text{\sffamily x}}
\newcommand{\cmark}{\checkmark}
\newcommand{\calR}{\mathcal{R}}
\newcommand{\shortcite}{\cite}
\newcommand{\nimrod}[1]{\textcolor{red}{Nimrod says: #1}}
\renewenvironment{description}[1][10pt]
  {\list{}{\labelwidth=10pt \leftmargin=#1
   }}
  {\endlist}
\newcommand{\insupp}[1]{}
\newtheorem{theorem}{Theorem}
\newtheorem{remark}{Remark}
\newtheorem{definition}{Definition}
\newtheorem{example}{Example}
\newtheorem{proposition}{Proposition}
\begin{document}

\title{Participatory Budgeting with Cumulative Votes}

\author{
Piotr Skowron\\
  University of Warsaw\\
  {\small \texttt{p.skowron@mimuw.edu.pl}}
\and 
Arkadii Slinko\\
   University of Auckland\\
  {\small \texttt{a.slinko@auckland.ac.nz}}
\and 
Stanisław Szufa\\
  Jagiellonian University\\
  {\small \texttt{stanislaw.szufa@uj.edu.pl}}
\and 
Nimrod Talmon\\
  Ben-Gurion University\\
  {\small \texttt{talmonn@bgu.ac.il}}
}

\maketitle

\begin{abstract}
In participatory budgeting we are given a set of projects---each with a cost, an available budget, and a set of voters who in some form express their preferences over the projects. The goal is to select---based on voter preferences---a subset of projects whose total cost does not exceed the budget. We propose several aggregation methods based on the idea of cumulative votes, e.g., for the setting when each voter is given one coin and she specifies how this coin should be split among the projects. We compare our aggregation methods based on (1) axiomatic properties, and (2) computer simulations.
We identify one method, Minimal Transfers over Costs, that demonstrates particularly desirable behavior. In particular, it significantly improves on existing methods, satisfies a strong notion of proportionality, and, thus, is promising to be used in practice.
\end{abstract}

\section{Introduction}

The idea of Participatory Budgeting (PB) was born in Brazil during the 1980s when political reformers explored ways to move beyond the political system associated with Brazil’s military dictatorship (1964-1985) based on exclusion and corruption \cite{wampler2012participatory}. They aimed to increase transparency of decision making of local bodies, hoping to enhance social justice and democracy~\cite{participatoryBudgeting}. 
At the first stage of PB, governments and civil society organizations identify the set of goals on the basis of certain principles without any reference to the budget. For example, many PB programs in Brazil use the “Quality of Life Index,” initially devised by the government in the city of Belo Horizonte. These goals are translated into particular projects that will be offered to the whole society to choose from. The second stage is actual voting in which voters express their views on the relative importance of the projects on offer. Thus, Participatory Budgeting is indeed a direct-democracy approach to budgeting. 

The most prominent applications of PB have been at the level of municipality, where a fraction of the municipal budget is decided through a residents-wide election. Other applications include, for example, \begin{inparaenum}[(1)]
\item An airline company deciding which movies to offer on its in-flight entertainment system~\cite{owaWinner}, where license costs for different movies can vary;
  and
\item A fans-owned soccer club wishing to sign deals with athletes.
\end{inparaenum}

Lately, PB has gained a considerable attention, and an increasing amount of funds are currently distributed this way\footnote{The web sites \url{http://www.participatorybudgeting.org} and \url{http://pbstanford.org} provide up-to-date data on the adoption of participatory budgeting in North America. The list of other cities that use PB can be found in \url{http://en.wikipedia.org/wiki/Participatory_budgeting_by_country}.}, thus the task of the research community is to gain understanding of its foundations, the properties of the existing procedures, and to suggest new more efficient procedures. The survey of some existing procedures is given in~\cite{azizsurvey}.\smallskip

 It is the second stage of PB which is of interest to us in this paper. 
We therefore assume that there is a set of projects, each with its own cost, a set of voters expressing their preferences over the projects, and a budget limit. The task of an aggregation procedure is to select a subset of the projects on offer reflecting voter preferences in the best possible way and whose total cost does not exceed the budget limit. (Formal definitions are given later). One of the motivating factors of our research is the close resemblance of participatory budgeting to multiwinner elections---in fact, when the costs of projects are all equal to one, the model for PB collapses to the model for multiwinner elections when a commitee of $k$ candidates is to be elected (so $k$ in this case is the budget). 

Even though the interest of ordinary people and political activists in PB is steadily increasing and its adoption has been steadily on the rise, the research on voting procedures for PB is still scarce. 
Specifically, with a few notable exceptions, the literature so far focused on procedures based on approval ballots, in which each voter submits a subset of projects that she finds acceptable, and, to a lesser extent, on ordinal ballots where each voter ranks the projects from the most to the least desirable one~\cite{azizsurvey}.
Here, we study a model where each of the $n$ voters is given virtual coins worth $L/n$, where $L$ is the budget limit, and she is asked to specify how this coin should be split among the projects. Distributing these coins between projects a voter can not only signal which projects are worthy of funding but also the intensity of preference. By analogy with the voting theory (see e.g., \citep{cole1949legal})  we call such ballots cumulative votes.

We suggest and study various aggregation methods for PB with cumulative votes and demonstrate that in the context of PB the use of cumulative ballots is a very natural choice.


Perhaps the main objection against using cumulative voting is concerned with the inherent trade-off between expressiveness and cognitive burden:
On the one hand, cumulative ballots allow for expressive votes, in particular they come with fine-grained information about intensities of voters' preferences. On the other hand, they require more effort from the voters. A first answer to this possible criticism is that in the context of participatory budgeting, the gains of expressiveness can outweigh the increasing cognitive burden; in particular, as we demonstrate later, most aggregation methods we discuss here for cumulative votes achieve a significantly better voters' satisfaction (a term we formally define later) than the analogous methods with simpler types of inputs.


Furthermore, as cumulative votes generalize both approval and ordinal ballots,\footnote{A ranking of the projects can be transformed into a cumulative ballot by applying a certain positional scoring function, such as Borda, to the ranking, and by normalizing the so-obtained scores.} our methods are applicable to these models as well. This means that voters are not required to use the full expressiveness of cumulative ballots, but can safely describe their preferences using approval or ordinal ones (which can then be translated, by the user interface or the algorithm itself, to cumulative ballots). The voters who wish to provide their preferences in a more expressive format can still do that.

We begin our investigation of aggregation methods for participatory budgeting with cumulative votes by considering three greedy aggregation methods, which are similar to top-$k$ multiwinner voting rules~\cite{mwchapter}; we refer to these rules as:
\begin{inparaenum}[(1)]
\item greedy by support;
\item greedy by excess; and
\item greedy by support over cost.
\end{inparaenum}
These methods are fairly natural, quite easy to explain, and computationally efficient. They have, however, certain drawbacks that we will demonstrate by showing their inability to satisfy few desirable axiomatic properties; in particular, these methods can completely disregard the preferences of minorities.
Thus, we continue our investigation and propose several adaptations of the Single-Transferable Vote (STV) rule to the PB setting. STV is a well-known ordinal-based multiwinner election rule that is widely used when proportional representation is sought, e.g., for parliamentary elections. 

Our adaptations of STV extend the original rule by:
\begin{inparaenum}[(1)]
\item allowing the voters to specify their preferences as cumulative ballots; and
\item specifying how to deal with projects with different costs.
\end{inparaenum}
The different adaptations vary in some key details. Here we informally describe one of them:
  Given a budget limit $L$, each of the $n$ voters gets a bag of $L/n$ coins which she can then distribute to the available projects as she wishes. Then, we look at the coin stash next to each project and, in each iteration, if there is a project which accumulated a sufficient funding, then we fund it, remove the cost of this project from its stash, and redistribute whatever is left in the stash to other projects in accord with the preferences of voters contributed to this project. Otherwise, we dismiss a project with the smallest stash next to it, redistribute its stash to other projects, and proceed iteratively.

We demonstrate the behavior of our rules through illustrative examples, showing their resilience to certain problems that the greedy rules suffer from. We further evaluate our STV adaptations by:
\begin{inparaenum}[(1)]
\item studying various axiomatic properties of these rules; and
\item reporting on simulations we performed.
\end{inparaenum}
We identify one rule, Minimal Transfers over Costs (MTC), that behaves particularly well with respect to our axiomatic properties, and that indeed produces proportional results. 

%

\section{Related Work}

There is a growing body of work on PB and related topics in the social choice literature (see, e.g.,~\cite{azizsurvey}). Firstly, they differ in the type of information solicited from the voters. Since most PB rules---in one way or another---adapt the existing voting rules, the input they solicit is usually in the traditional social choice formats: the voters are either asked to approve some of the projects (and hence disapprove the others) or to rank projects in accord to their desirability. In approval ballots voters are asked either to approve a fixed, say $k$, number of projects \cite{abpb} or to approve any set of projects whose total cost does not exceed the budget (knapsack voting) \cite{goel2019knapsack}. Goel et al also use pairwise comparisons as the input format: they ask voters to compare pairs of projects by the ratio between value and cost. These comparisons are
aggregated using variants of classic voting rules, including the Borda count and the Kemeny rules.

Klamler et al \cite{klamler2012committee} and Lu and Boutilier~\cite{lu2011budgeted} modified the idea of multiwinner voting rule which, given the rankings of the alternatives, selects a fixed number $k$ of alternatives given their costs and a budget. In these works rankings do not depend on the costs. An $\ell$-truncated ballots can also be used instead of full ballots \cite{bentert2020comparing}.

A few recent works have considered soliciting cardinal utilities of projects~\cite{fainMunSha18:core_and_pb,fluSkoTruWil:nash_complexity}, where the voters specify the utility of implementation of each project to them.
Benade et al \cite{benade2017preference} suggested their version of knapsack voting assuming that voters vote for the bundle that maximizes their utility. They also suggested to solicit voters' rankings of projects by their value (Value voting) or value-for-money (Value-for-money voting), and the so-called Threshold voting, where each voter specifies the subset of projects whose value is perceived to be above a predefined threshold.
Laruelle \cite{laruelle2020voting} assumes that these utilities are implicit and can be calculated from rankings. We will show that the known rules for PB with cardinal utilities sometimes exhibit undesirable behavior---in particular, they violate pretty basic proportionality axioms. In this paper we define rules which do not share these undesirable features.

The idea of proportional representation in PB transforms into a fairness issue and it attracted attention of researchers. It should not happen, for example, that no project is selected among those for which a large minority voted. In particular, the idea of justified representation for approval based budgeting rules was advocated in \cite{aussieone} and the idea of proportionality for solid coalitions in the ordinal setting in~\cite{aziz2019proportionally}.


Cumulative and cardinal ballots have been also considered in a divisible model of participatory budgeting, i.e., for aggregating divisions of available funds among existing projects, rather than to select projects for funding~\cite{FainGM16,dominiktruthful}.
%
%
A larger chunk of the literature on cumulative voting concerns single-winner elections; see, e.g.,~\cite{mills1968mathematics,bhagat1984cumulative,cole1949legal,vengroff2003electoral}. Most of this literature, yet, focuses only on practical and legal aspects of using cumulative voting.

 There is a large body of research on the famous voting rule Single Transferable Vote (STV) and there is some consensus among the social choice community that it is a particularly good rule for single-winner elections and, especially, multiwinner elections in cases where proportional representation is a desired property; see, e.g., \cite{tid-ric:j:stv,elk-fal-sko-sli:j:multiwinner-properties,mw2d}. This prompted us to adapt STV to PB via cumulative voting.
We also mention the work of Ford~\cite[Section 3.4]{ford2020liquid} that suggests using a cumulative version of STV for multiwinner elections.

The closest research project to ours is that described by \emph{accurate democracy}.\footnote{\url{https://www.accuratedemocracy.com/p_need2.htm}} There, an iterative procedure for PB that resembles STV is described. From the rather informal description that is given there we could identify a few key differences between the procedure described there and ours---in particular, the rule there uses both ordinal and cumulative ballots. Also, no axiomatic analysis of the rule is provided.

\section{Formal Model}

In our model there is a set of projects $P = \{p_1, \ldots, p_m\}$; the cost of a project $p \in P$ is a natural number, denoted~$c(p)$.
There is a set of $n$ voters $V = \{v_1, \ldots, v_n\}$, where voter~$v_j$ expresses her preferences over the projects by assigning a value $v_j(p)$ to each $p \in P$ such that $v_j(p) \geq 0$ and $\sum_{p \in P} v_j(p) = 1$ (notice that we refer to both the $j$th voter and her cumulative vote using the same symbol $v_j$); intuitively, the value of $v_j(p)$ is understood as the fraction of the funds owned by voter $v_j$ that the voter would like to assign to project~$p$.\footnote{An alternative approach would be to interpret $v_j(p)$ as the fraction of the available funds that voter $j$ thinks should be assigned to project $p$. These two interpretations are close and differ in whether we take the local or the global view on the voters' preferences. In this paper we take the local interpretation---we assume that the voters indirectly control the funds, and indicate how parts of funds that they control should be spread among the projects.}
We say that a voter $v_i$ \emph{supports} a project~$p$ if $v_j(p) > 0$.

The above notation naturally extends to sets. For each $B \subseteq P$ and each $v_j \in V$ we set $v_j(B) = \sum_{p \in B}v_j(p)$ and $c(B) = \sum_{p \in B} c(p)$.

A \emph{budgeting scenario} is a tuple $(P, V, c, L)$, where $P$, $V$, and $c$ are as defined above, and $L \in \mathbb{N}$ is a \emph{budget limit}.
An \emph{aggregation method} is a function (an algorithm) that, given a budgeting scenario, selects a \emph{bundle} of projects $B \subseteq P$ such that $c(B) \leq L$. 


\section{Greedy Rules}\label{sec:greedy_rules}

In this section we take, arguably, the most straightforward approach, and adapt known greedy algorithms for participatory budgeting~\cite{goel2019knapsack} to cumulative ballots. We start by describing the general class of greedy rules.

Let $f$ be a function that given a project $p$ returns a real value, called the \emph{priority} of~$p$. The greedy rule based on $f$ first ranks the projects in the descending order of their priorities, as given by $f$. Next, the rule iterates through the ranked list of the projects,\footnote{If $f$ is not injective, then we resolve ties using an arbitrary fixed tie-breaking rule.} in each iteration deciding whether the project at hand will or will not be selected. Let $L(t)$ denote the remaining budget in the $t$-th iteration of the procedure ($L(1) = L$). In the $t$-th iteration the rule examines project $p$: if $c(p) \leq L(t)$, then $p$ is selected, and the remaining budget is updated $L(t+1) = L(t) - c(p)$. Otherwise, $p$ is not selected, and $L(t + 1) = L(t)$. The following three aggregation methods are greedy rules: 

\begin{description}

\item[Greedy-by-Support (GS).] This is the greedy rule based on $f_{\mathrm{GS}}(p) = \sum_{j \in [n]} v_j(p)$.


\item[Greedy-by-Support-over-Cost (GSC).] It is based on  $f_{\mathrm{GSC}}(p) = (\nicefrac{1}{c(p)}) \cdot \sum_{j \in [n]} (v_j(p) \cdot (L / n))$.

\item[Greedy-by-Excess (GE).] Based on $f_{\mathrm{GE}}(p) = \sum_{j \in [n]} (v_j(p) \cdot (L / n)) - c(p)$.


\end{description}

\begin{remark}
We do not consider Greedy-by-Excess-over-Cost as it is equivalent to GSC.
\end{remark}

The first rule described above (GS) can be seen as an adaptation of Knapsack Voting~\cite{goel2019knapsack} to cumulative ballots. 
%
Yet, all three rules share a negative feature, namely that a significant part of the population of voters might be ignored when they split their votes on too many projects.

\begin{example}\label{greedy_split_votes}
Consider a set $P$ of 20 projects, all having the same cost equal to one, and a set of 100 voters with the following preferences: The first 60 voters consider the first 10 projects excellent and they all decide to assign the value $\nicefrac{1}{10}$ to each of them. The remaining 40 voters have quite opposite preferences---they decide to put the utility of $\nicefrac{1}{10}$ on each of the last 10 projects. The budget limit is $L = 10$. Here, GS, GSC, and GE would select the first 10 projects for funding, thus effectively ignoring the opinion of a large fraction of the society. \qed
\end{example}
\smallskip

Example~\ref{greedy_split_votes} also demonstrates the inherent difficulty of achieving proportionality with cumulative ballots. E.g., one aggregation method using cardinal utilities, which is known in the literature and considered proportional is based on the idea of maximizing the smoothed Nash welfare (SNW)~\cite{fainMunSha18:core_and_pb,fluSkoTruWil:nash_complexity}.
If we simply assume that cumulative ballots correspond exactly to cardinal utilities, then the rule would work as follows:
  For each budgeting scenario $(P, V, c, L)$, SNW would return a bundle $B$ which does not exceed the limit $L$ and which maximizes the product
  $\prod_{v_j \in V} \Big(v^*_j + \sum_{p \in B}v_i(p)\Big)$, where $v^*_j$ is the maximum value that voter $j$ can get in any feasible outcome.
SNW, when applied to the budgeting scenario from Example~\ref{greedy_split_votes} would inappropriately favor the majority of voters---it would select eight projects supported by 60\% of voters and 2 projects supported by 40\% of voters; it would share (yet to a lesser extent) the aforementioned negative feature of greedy rules.

In the next section we mainly aim at remedying the undesired behavior of GS, GSC, and GE by offering other aggregation methods which have a visible Single Transferable Vote (STV) flavor. Indeed, in the instance described in Example~\ref{greedy_split_votes}, our new methods would select $6$ projects supported by the group of $60\%$ voters.

\section{Cumulative Single Transferable Vote (CSTV) Rules}\label{section:cstv}

In this section we describe several adaptations of the Single Transferable Vote (STV) rule to the case of participatory budgeting with cumulative votes. We refer to these adaptations as Cumulative-STV, or, in short, as CSTV.
We first describe the general scheme, and later discuss several variants, which differ in certain key aspects regarding their specific operation.

All our variants of CSTV are based on the following compelling idea:
  Each from the $n$ voters should be able to decide where to allocate a $\nicefrac{1}{n}$-th fraction of the budget. Correspondingly, we say that a project $p$ is \emph{eligible for funding} if:
\begin{align*}
\mathrm{support}(p) = L \cdot \frac{\sum_{j=1}^n v_j(p)}{n} \geq c(p) \text{\ .}
\end{align*}

Notice that the total cost of the projects that are eligible for funding does not exceed the budget. At first it may seem reasonable to simply pick these projects and reject the others. This simple strategy, however, might often result in undesirable outcomes. For example, assume that there is a large number of excellent project proposals, each is liked by almost everyone, and that the voters decided to distribute their support roughly uniformly among the projects. In such a case it is possible that no project would be eligible for funding and such a simple rule would return the empty set (a similar behavior would be observed for the budgeting scenario given in Example~\ref{greedy_split_votes}). In order to deal with this and similar situations we allow the algorithm to perform certain transfers of the cumulative ballots of the voters between the projects. 

The idea of these transfers is as follows. Since voters do not have any coordination devices, they may allocate too much money for some projects. If the voter would be informed that her contribution is not needed for a certain project and that there will be enough funds without her (or that, even with her support, the project would not be funded), then she would divert her funds to other projects she liked. The CSTV rules take care of this oversupply of funds and redistribute voters' support on behalf of them.

\subsection{The General CSTV Rule}

Here we describe the general scheme. To make it a concrete rule, one has to specify the following subroutines:
\begin{inparaenum}[(1)] 
 \item project-to-fund selection procedure,
 \item excess redistribution procedure,
 \item no-eligible-project procedure, and
 \item inclusive maximality postprocedure.
\end{inparaenum}
We will discuss these subroutines in the subsequent part of this section.

The general scheme is as follows:
  Initialize $S = \emptyset$.
Loop over the following until a halting condition is met:
  If there are projects that are eligible for funding, choose one such project $p$ according to the ``project-to-fund selection procedure''. If the total value that the voters put on $p$ is strictly greater than the value needed for selecting the project (i.e., if $\mathrm{support}(p) > c(p)$), then for each voter $v_j$ with \mbox{$v_j(p) > 0$}, transfer a part of her initial support from  $p$ to other projects that $v_j$ had initially supported, so that $\mathrm{support}(p)$ is as close to $c(p)$ as possible. Such transfers are performed according to the ``excess redistribution procedure'', described below. Next, add $p$ to $S$, remove it from further consideration, reduce the available budget, $L:=L-c(p)$, and make the voters pay for $p$, i.e., for all $v_j$, set $v_j(p) = 0$.
  
  Else, that is, if there is no project eligible for funding, perform one of the following actions:
  \begin{inparaenum}[(1)] 
  \item select and eliminate a project $p$; transfer the values that the voters put on $p$ to other projects, or
  \item select a project $p$ and transfer values from other projects to $p$ so that it becomes eligible for funding. 
  \end{inparaenum}
   This step is performed according to the ``no-eligible-project'' policy. Finally, move to the  beginning of the loop.

After a halting condition is met, the remaining part of the budget might still be large enough to fund at least one additional project (in such a case, we say that the bundle of selected projects is not \emph{inclusive maximal}). If this is the case, we can run the ``inclusive maximality postprocedure'', or leave a part of the available budget unused.


        
      

Below we write the specific procedures which differentiate between the variants of CSTV we consider.

\subsection{Project-To-Fund Selection Procedure}

If there are multiple projects eligible for funding, we pick the one with the highest priority, using one of the following three priority functions:
\begin{inparaenum}[(1)]
\item $f_{\mathrm{GS}}$,
\item $f_{\mathrm{GSC}}$, or
\item $f_{\mathrm{GE}}$,
\end{inparaenum}
which we described in Section~\ref{sec:greedy_rules} in the context of greedy rules.

%



\subsection{Excess Redistribution Procedure}\label{sec:excess_redistribution}

In our further study we use the proportional strategy for redistributing the excess. To describe it formally, let $p$ denote the project currently selected for funding, and let $\mathrm{tran}(p)$ denote the set of voters who put a part of their support to~$p$ and also to some other not yet selected project:
\begin{align*}
\mathrm{tran}(p) = \{v_j \mid v_j(p) > 0 \ \text{and} \ \exists{p' \notin S\colon}\ v_j(p') > 0 \} \text{\ .}
\end{align*}

We make the payments proportional to the initial supports. We find $\gamma < 1$ such that:
\begin{align*}
\frac{\gamma L}{n} \sum_{v_j \in \mathrm{tran}(p)} v_j(p) \ \  + \ \  \frac{L}{n} \sum_{v_j \notin \mathrm{tran}(p)} v_j(p) =&\ c(p)\ \text{.}
\end{align*}
Intuitively, $\gamma$ is a factor such that if each voter $v_j \in \mathrm{tran}(p)$ scales her support for $p_i$ by $\gamma$, then $p$ will get exactly the support equal to its cost.
Next, for each $v_j \in \mathrm{tran}(p)$ we distribute $(1 - \gamma) \cdot v_j(p)$ among all not yet selected projects, proportionally to the initial supports that $v_j$ assigned to these projects.\footnote{
There are also other natural possibilities for redistributing the excess. For example, one could think of an additive version of proportional shares, which we term equal shares. The idea is to make the voters pay for the selected project as equal shares as possible. Formally, we find $\lambda$ such that 

\begin{align*}
\frac{L}{n} \cdot \sum_{v_j \in \mathrm{tran}(p)} \min(v_j(p), \lambda) + \frac{L}{n} \cdot \sum_{v_j \notin \mathrm{tran}(p)} v_j(p) = c(p) \text{\ ,}
\end{align*}
and for each voter $v_j \in \mathrm{tran}(p)$ with $v_j(p) > \lambda$ we distribute the surplus of the support $(v_j(p) - \lambda)$ among all not yet selected projects, proportionally to the initial supports that voter~$v_j$ assigned to these projects.

Another option would be to adapt an egalitarian criterion and to minimize the maximal transfer a voter must perform. Hereinafter we do not investigate these strategies in detail, yet we consider studying them an interesting direction for future work. 
}

\subsection{No-Eligible-Project Procedure}

We have two alternative procedures to apply when there is no project which is eligible to funding. These procedures are performed until certain project becomes eligible for funding.

\begin{description}

\item[Elimination-with-Transfers (EwT).] Here, we eliminate a project~$p$ with either the minimal $\mathrm{excess}(p) = \mathrm{support}(p) - c(p)$ or the minimal ratio $\mathrm{excess}(p) / c(p)$. If we chose the GE Project-to-fund selection procedure, then we do the former; if we chose the GSC Project-to-fund selection procedure then we do the latter. Once $p$ is chosen for elimination, then for each voter $v_j$ who put a part of their support on~$p$, we transfer this part to the other projects, proportionally to the initial supports that $v_j$ assigned to them. If $v_j$ put their support only on~$p$ then no transfers are made.
Notice that, if, at any time, there is a project that costs more than the total amount of money left, then it will be eventually eliminated and its money will be redistributed.

\item[Minimal-Transfers (MT).] We look for a project which may become eligible for funding if we transfer part of the support from other projects. Formally, we say that a project $p$ is \emph{eligible for funding by transfers} if 
\begin{align*}
\frac{L}{n} \cdot \sum_{j \colon v_j(p) > 0} \sum_{\ell=1}^m  v_j(p_\ell) \geq c(p)\ .
\end{align*}
That is, $p$ is eligible by transfers if its cost would be achieved provided the voters who initially put some positive utility value on $p$ would redirect all their remaining supports towards $p$.

If we choose the GE Project-to-fund selection procedure, then among all projects which are eligible by transfers, we select the one that minimizes the total amount of transfers that are required in order for it to became eligible; i.e., the project $p$ which (among those eligible by transfers) has maximal $\mathrm{excess}(p)$ (smallest in absolute value). If we choose GSC as the project-to-fund selection procedure, then among the projects that are eligible by transfers we pick the one, $p$, with the highest ratio $\mathrm{support}(p)/c(p)$.

We transfer the supports from other projects to $p$ so that $p$ reaches the eligibility threshold. We again follow the proportional strategy (cf. Section~\ref{sec:excess_redistribution}). In order to define proportional shares we iteratively do the following:
  First, we compute the ratio $r = \mathrm{support(p)}/c(p)$. If $r<1$, then every supporter $v_j$ of $p$ updates her votes. First the voter computes the desired support she should put to $p$: $v_j(p) := \min(\sum_{\ell =1}^m{v_j(p_\ell)}, \frac{v_j(p)}{r})$. Then, such a voter updates her votes, by proportionally transferring her support from other projects towards $p$.
We continue the procedure until $r=1$.

\end{description} 

\subsection{Inclusive Maximality Postprocedure}

We have two procedures to continue in cases when the algorithm halts but a part of the available budget is still unused.

\begin{description}

\item[Reverse Eliminations (RE).] We apply this procedure only when using Elimination-with-Transfers for selecting non-eligible projects. We iterate over the not-selected projects in the order reverse to the one in which they were eliminated. For each project, we check whether its cost does not exceed the available funds, and if so we fund it. This procedure is consistent with the logic of EwT, as EwT can be viewed as a procedure that creates a ranking of the projects: Whenever it adds a project $p$ for funding, it puts $p$ in the first available position in the ranking; when it eliminates $p$, it puts $p$ in the last available position. Thus, EwT with reverse eliminations is a greedy procedure that moves in the order consistent with the ranking returned by EwT.

\item[Acceptance of Undersupported Projects (AUP).] This procedure is allowed only when we use Minimal-Transfers as the no-eligible-project procedure. We proceed similarly as in MT, but this time we do not check the condition for eligibility by transfers. That is, if we use GE Project-to-fund selection procedure, then among not-yet selected projects that have their costs no-greater than the remaining budget, we pick the project $p$ which maximizes:
\begin{align*}
\frac{L}{n} \cdot \left(\sum_{j \colon v_j(p) > 0} \sum_{\ell=1}^m v_j(p_\ell)\right) - c(p) \text{\ ,}
\end{align*}
If we use GSC, then we choose~$p$ that maximizes:
\begin{align*}
\frac{\frac{L}{n} \cdot \sum_{j \colon v_j(p) > 0} \sum_{\ell=1}^m v_j(p_\ell)} {c(p)} \text{\ .}
\end{align*}

For each voter $v_j$ with $v_j(p) > 0$ we transfer all her support from other projects to $p$. We add $p$ to $S$ and repeat the procedure until no further project can be added.

\end{description}

\subsection{Selection of Variants}

The various design choices described above give rise to a number of aggregation methods, out of which we consider the following concrete CSTV aggregation methods:
  EwT (i.e., GE + EwT + RE),
  EwTC (i.e., GSC + EwT + RE),
  MT (i.e., GE + MT + AUP),
  MTC (i.e., GSC + MT + AUP); and out of the greedy rules we consider GS and GSC. We decided to chose those variants, since in our initial axiomatic and experimental analysis they gave the most promising results. 
  
%

\section{Axiomatic Properties}

In this section we compare our methods according to certain axiomatic properties.
We use these axioms to better understand the behavior of our rules. We concentrate on two types of axioms:
  (1) axioms that relate to monotonicity, as they are usually quite standard and provide general understanding of rules; and (2) axioms that relate to proportionality, as proportionality (e.g., taking care for minorities) is highly desired for participatory budgeting (see, e.g.,~\cite{aussieone}).
%
Our analysis is summarized in Table~\ref{table:axioms}.

\begin{table}[t]
	\centering
		\begin{tabular}{ c c c c c c c} 
			\toprule
			& GS & EwT & MT & GSC & EwTC & MTC \\
			\midrule
			Splitting monotonicity
			    & \xmark
			    & \cmark
			    & \cmark
			    & \cmark
			    & \xmark
			    & \xmark
			    \\
			Merging monotonicity
			    & \cmark
			    & \xmark
			    & \xmark
			    & \xmark
			    & \xmark
			    & \xmark
			    \\
			Support monotonicity
			    & \xmark
			    & \xmark
			    & \xmark
			    & \xmark
			    & \xmark
			    & \xmark
			    \\
			\midrule
			Weak-PR
			    & \xmark
			    & \cmark
			    & \cmark
			    & \cmark
			    & \cmark
			    & \cmark
			    \\
			PR
			    & \xmark
			    & \cmark
			    & \cmark
			    & \xmark
			    & \cmark
			    & \cmark
			    \\
			Strong-PR
			    & \xmark
			    & \xmark
			    & \cmark
			    & \xmark
			    & \xmark
			    & \cmark
			    \\
			\bottomrule
	\end{tabular}
	\caption{Axiomatic properties of GS, GSC, and the CSTV variants.}
	\label{table:axioms}
\end{table}

\subsection{Monotonicity Axioms}

In this section we consider three monotonicity axioms. The first two---splitting monotonicity and merging monotonicity---have been considered by Faliszewski and Talmon~\cite{abpb}, but in the context of approval-based preferences. 

Splitting monotonicity requires that, if a funded project $p$ is  split into several projects $P'$, according with its cost and the support it got from the voters, then at least one of the projects of $P'$ must be funded. Below we provide a formal definition.

\begin{definition}[Splitting monotonicity]
An aggregation method~$\mathcal{R}$ satisfies \emph{splitting monotonicity} if for each budgeting scenario $E = (P, V, c, L)$, for each funded project $p \in \mathcal{R}(E)$, and for each budgeting scenario $E'$ which is formed by splitting project $p$ into a set of projects $P'$ with the same cost $c(p) = c(P')$, and such that for each voter $v_i$ we have $v_i(P') = v_i(p)$, it holds that $\mathcal{R}(E') \cap P' \neq \emptyset$.
\end{definition}

Since splitting monotonicity seems a natural axiom 
it is quite surprising that only three of our CSTV rules satisfy it.

\begin{theorem}
  GS, EwTC, and MTC do not satisfy splitting monotonicity, while GSC, EwT and MTC satisfy the property.
\end{theorem}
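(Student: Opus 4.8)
The statement bundles six claims: three rules (GS, EwTC, MTC) must be shown to violate the axiom via explicit counterexamples, and three (GSC, EwT, MT, as recorded in Table~\ref{table:axioms}) to satisfy it via a general argument. The backbone of every positive case is one averaging observation. If a funded project $p$ is split into $P'$ with $c(P') = c(p)$ and $v_i(P') = v_i(p)$ for all voters, then $\mathrm{support}(P') = \sum_{q \in P'} \mathrm{support}(q) = \mathrm{support}(p)$ while $\sum_{q \in P'} c(q) = c(p)$; hence by the mediant inequality at least one piece $q^\star$ satisfies $f_{\mathrm{GSC}}(q^\star) = \mathrm{support}(q^\star)/c(q^\star) \geq \mathrm{support}(p)/c(p) = f_{\mathrm{GSC}}(p)$, and the total excess $\sum_{q\in P'} f_{\mathrm{GE}}(q) = f_{\mathrm{GE}}(p)$ is preserved.

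First I would prove the GSC case, which is purely greedy. Fix a piece $q^\star$ with $f_{\mathrm{GSC}}(q^\star) \geq f_{\mathrm{GSC}}(p)$, so $q^\star$ is ranked no later than $p$ among the unchanged projects. I would then establish a monotonicity lemma for greedy knapsack: inserting into an instance additional projects of total cost $\Delta$ increases the total budget consumed by at most $\Delta$ (proved by cutting the processing order at the insertion point---everything before is unchanged, each inserted project adds at most its own cost, and reducing the downstream budget can only reduce downstream spending). Applied to the truncated instance of projects of priority above $q^\star$, the extra pieces there have total cost at most $c(p) - c(q^\star)$, so the budget available when $q^\star$ is reached in $E'$ is at least $\big(L - (\text{budget consumed before } p \text{ in } E)\big) - \big(c(p) - c(q^\star)\big) \geq c(p) - \big(c(p) - c(q^\star)\big) = c(q^\star)$, where the middle inequality is exactly the fact that $p$ fit in $E$. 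Hence $q^\star$ is funded; the boundary case $f_{\mathrm{GSC}}(q^\star) = f_{\mathrm{GSC}}(p)$ is settled by letting the pieces inherit $p$'s place in the tie-breaking order.

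For the two STV-style rules that satisfy the axiom, EwT and MT, the plan is a coupling between the runs on $E$ and on $E'$. The governing invariant is that at every stage each voter's support summed over the surviving pieces of $p$ equals her support on $p$ in the coupled run, and her support on every other project is identical; thus the block $P'$ behaves in aggregate exactly like $p$ with respect to eligibility, selection by $f_{\mathrm{GE}}$, and the proportional transfers. I would show that whenever $p$ becomes eligible in $E$ some piece is eligible in $E'$ (directly, through elimination-with-transfers for EwT, or through eligibility-by-transfers for MT), and that the inclusive-maximality postprocedures (RE for EwT, AUP for MT) recover a piece if it was eliminated or left undersupported, giving $\calR(E') \cap P' \neq \emptyset$. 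This coupling is the main obstacle: because the rule acts on pieces individually, the proportional transfers can desynchronize the two runs, so the argument must track the aggregate support on $P'$ and certify that some piece retains ratio at least $1$ or is restored by the postprocedure.

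Finally, the negative cases are counterexamples. For GS, take $L = 2$ with projects $p$ ($c(p)=2$), $a$ and $b$ ($c(a)=c(b)=1$), and $100$ voters: $40$ place their whole ballot on $p$, $31$ on $a$, and $29$ on $b$, so $f_{\mathrm{GS}}(p)=40 > f_{\mathrm{GS}}(a)=31 > f_{\mathrm{GS}}(b)=29$ and GS funds only $p$; splitting $p$ into $p_1,p_2$ with $c(p_i)=1$ and each $p$-voter moving to $\nicefrac{1}{2}$ on each piece gives $f_{\mathrm{GS}}(p_i)=20$, so GS now funds $a$ and $b$, exhausts the budget, and funds no piece. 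For EwTC and MTC I would engineer an instance in which $p$ is funded through ratio-based eligibility, but a split lowers each piece's ratio enough that the $f_{\mathrm{GSC}}$-driven elimination order together with the proportional transfers drains the pieces' support onto rival projects that then consume the budget; the delicate point, and the place where the construction must be tuned carefully, is arranging the costs so that the inclusive-maximality postprocedure cannot afterwards re-fund any piece.
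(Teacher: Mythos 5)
Your GS counterexample is valid (40/31/29 voters, budget $2$: the split pieces drop to priority $20$ and the two unit-cost rivals exhaust the budget), and your GSC argument is sound---indeed more careful than the paper's, which essentially asserts the conclusion right after the mediant inequality. Two remarks on it, though. The clause ``reducing the downstream budget can only reduce downstream spending'' is true but not immediate: the naive invariant that the run with less budget always has less remaining budget is false (it can skip an expensive project and then afford later ones), and one needs a first-divergence argument---at the first point where the larger-budget run funds something the smaller-budget run skips, the larger run's consumption already exceeds the smaller budget, hence exceeds everything the smaller run can ever consume. Also, you can avoid the insertion lemma entirely by taking $q^\star$ to be the piece of \emph{maximum} ratio: then no piece at all is processed before $q^\star$, the originals processed before $q^\star$ form a top segment of those processed before $p$ in $E$, so the budget remaining at $q^\star$ in $E'$ is at least the budget remaining at $p$ in $E$, which is at least $c(p) \geq c(q^\star)$. (You also read the statement correctly: the second ``MTC'' is a typo for MT, as Table~\ref{table:axioms} confirms.)

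However, only two of the six claims are actually established; the other four are left as plans, and that is where the content lies. First, you give no counterexamples for EwTC and MTC---``I would engineer an instance'' is not a proof, and these constructions are delicate. The paper's EwTC instance needs three projects with costs $199, 102, 200$ and $\epsilon$-perturbed ballots so that the split reverses the elimination order and the minority's transferred support lands on a rival that then consumes the budget; its MTC instance (costs $150, 151$, budget $200$) exploits the sharper phenomenon that a split can destroy \emph{eligibility by transfers}, after which the rival project is selected and blocks both pieces. Second, for EwT and MT you explicitly flag the desynchronization of the coupled runs as ``the main obstacle'' and then do not resolve it; this is precisely the crux. The paper closes it with the invariant $\mathrm{excess}(p_a)+\mathrm{excess}(p_b)=\mathrm{excess}(p)$ together with a case analysis: if, say, $p_a$ is eliminated before $p$'s fate is decided in the coupled run, then $p_a$ has minimal excess, whence $\mathrm{excess}(p_a)\leq\mathrm{excess}(p)$ and therefore $\mathrm{excess}(p_b)\geq 0$ at that moment---so $p_b$ is eligible, and since under the proportional transfers a project's support never decreases, $p_b$ is eventually funded; the MT analogue replaces eligibility by eligibility-by-transfers. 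Without carrying out this step, the positive claims for EwT and MT remain unproven.
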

\begin{proof}
\textbf{GS}: Consider the following budgeting instance with $1$ voter and $2$ projects, $p_1$ and $p_2$. Assume that $c(p_1) = c(p_2) = 2$ and that the budget is $L = 2$. The cumulative ballot of the single voter $v$ is $v(p_1) = 0.6$ and $v(p_2) = 0.4$. GS would select $p_1$. Now, assume that we split $p_1$ into two projects, $p_a$ and $p_b$, such that $c(p_a) = c(p_b) = 1$ and such that $v(p_a) = v(p_b) = 0.3$. Now, GS would select $p_2$, failing splitting monotonicity.

\textbf{GSC}: First, observe that in order to prove that a rule satisfies splitting monotonicity it is sufficient to consider cases where the project that is to be split is divided into two parts. Indeed, if the project were divided into more than two parts, one could use the reasoning for splitting into two projects recursively.

Consider a project $p$ that is about to be split into $p_a$ and~$p_b$. Observe that either 
\begin{align*}
&\frac{\mathrm{support}(p_a)}{c(p_a)} \geq \frac{\mathrm{support}(p)}{c(p)} \text{\ , or} \\
&\frac{\mathrm{support}(p_b)}{c(p_b)} \geq \frac{\mathrm{support}(p)}{c(p)}
\end{align*}
Indeed, if none of the above inequalities held, then we would have:
\begin{align*}
1 = \frac{\mathrm{support}(p_a) + \mathrm{support}(p_b)}{\mathrm{support}(p)} < \frac{c(p_a)}{c(p)} + \frac{c(p_b)}{c(p)} = 1 \text{\ ,}
\end{align*}
a contradiction. W.l.o.g., let us assume that:
\begin{align*}
\frac{\mathrm{support}(p_a)}{c(p_a)} \geq \frac{\mathrm{support}(p)}{c(p)}\  \text{.}
\end{align*}

Assume that $p$ is selected by the rule. Then, after split, $p_a$ would be selected in the same moment as $p$ was selected, or before. This proves that GSC satisfies splitting monotonicity.

\textbf{EwT}: Consider a project $p$ that is selected for funding and assume $p$ is split into two projects, $p_a$ and $p_b$. Note that whenever a voter $v$ transfers some value to $p$ (either as a result of redistributing the excess or as a result of being removed), then after the split, $v$ transfers to $p_a$ and $p_b$ the same total value as she transferred to $p$---this is because we use the proportional strategy of redistributing values. Furthermore, $v$ transfers to the other projects the same value as before the split.

Consider the steps of the algorithm before $p$, $p_a$, or $p_b$ is removed or selected. In each such step we have that:
\begin{align*}
    \mathrm{excess}(p_a) + \mathrm{excess}(p_b) = \mathrm{excess}(p) \text{ .}
\end{align*}
First, consider the case where neither $p_a$ or $p_b$ is removed before $p$ is selected or removed. If $p$ is selected (in such a time moment its support is greater than or equal to its cost), then either $p_a$ or $p_b$ is eligible for funding, and so it will be eventually selected. If $p$ is removed (which means it is added in the RE phase of the algorithm), then by our assumption $p_a$ and $p_b$ will be removed after $p$, and so they both will be added for funding in the RE phase.

Second, consider the case when $p_a$ or $p_b$---say $p_a$---is removed before $p$ is selected or removed. At the time $p_a$ is removed we have that:
\begin{align*}
\mathrm{excess}(p_a) \leq \mathrm{excess}(p) \text{ .}
\end{align*}
In that moment we have:
\begin{align*}
\mathrm{excess}(p_b) \geq 0 \text{ .}
\end{align*}
Hence, $p_b$ will be eventually selected. As before, the reasoning can be recursively applied to the case when $p$ is split to any set of projects $P'$.

\textbf{EwTC}: Consider the following instance with 3 projects, $p_1, p_2$, and $p_3$. Their costs are $c(p_1) = 199$, $c(p_2) = 102$, and $c(p_3) = 200$, and the budget is $L = 200$. Let us fix a small constant $\epsilon$. There are 200 voters with the cumulative ballots given in the following table:

\begin{center}
   \begin{tabular*}{0.9\linewidth}{@{\extracolsep{\fill}}@{}lccc@{}}
    \toprule
      \# votes  & $p_1$ & $p_2$ & $p_3$ \\ 
    \midrule
    $140$  & $\nicefrac{9}{14}$ & $\nicefrac{5}{14}$ & 0 \\ 
    $60$  & $\nicefrac{1}{6} - \epsilon$ & $\epsilon$ & $\nicefrac{5}{6}$ \\ 
    \bottomrule
   \end{tabular*}
\end{center}

The support of $p_1, p_2$, and $p_3$ equals respectively, $100 - 60\epsilon$, $50 + 60\epsilon$, and $50$. Thus, $p_3$ will be eliminated first. The last 60 voters will transfer (almost) the entire support to $p_1$. Thus, $p_2$ will be eliminated next, and so $p_1$ will be selected. 

Now, assume that $p_1$ is split into $p_a$ and $p_b$ such that $c(p_a) = 100$, and $c(p_b) = 99$. The voters' preferences look as follows:

\begin{center}
   \begin{tabular*}{0.9\linewidth}{@{\extracolsep{\fill}}@{}lcccc@{}}
    \toprule
      \# votes  & $p_a$ & $p_b$ & $p_2$ & $p_3$ \\ 
    \midrule
    $140$  & 0 & $\nicefrac{9}{14}$ & $\nicefrac{5}{14}$ & 0 \\ 
    $60$  & $\nicefrac{1}{6} - \epsilon$ & 0 & $\epsilon$ & $\nicefrac{5}{6}$ \\
    \bottomrule
   \end{tabular*}
\end{center}
Then, $p_a$ is eliminated first, and the last 60 voters transfer their almost entire support from $p_a$ to $p_3$. Next, $p_3$ is eliminated, and the whole value of the last 60 voters is transferred to $p_2$. Thus, the total support of $p_2$ becomes 110. In the last step $p_2$ is selected, and neither $p_a$ nor $p_b$ fit within the remaining budget. 

\textbf{MT}: Here the reasoning is similar to the case of EwT. First, we observe that the sum of transfers to each project stays the same after the split. If at some point $p$ was eligible for funding, then at this point $p_a$ or $p_b$ was as well, and so one of these projects would be selected. Thus, from now on, let us assume that the excess of $p$ was always negative. Observe that either the excess of $p_a$ or $p_b$ becomes positive, in which case the project would be selected, or the excesses of both $p_a$ and $p_b$ are greater than that of $p$. Furthermore, if $p$ is eligible by transfers, then $p_a$ or $p_b$ is as well. Thus, $p_a$ or $p_b$ will be selected at most at the time when $p$ was.

\textbf{MTC}: Consider the instance with 2 projects, $p_1$ and $p_2$, with costs equal to $c(p_1) = 150$ and $c(p_2) = 151$. The budget is $L = 200$. There are 200 voters with the following preferences:

\begin{center}
   \begin{tabular*}{0.9\linewidth}{@{\extracolsep{\fill}}@{}lcc@{}}
    \toprule
      \# votes  & $p_1$ & $p_2$ \\ 
    \midrule
    $150$  & $\nicefrac{1}{3}$ & $\nicefrac{2}{3}$  \\ 
    $50$   & $1$ & $0$  \\ 
    \bottomrule
   \end{tabular*}
\end{center}

In this example, the supports of the two projects are the same, they both are eligible by transfers, thus MTC selects the cheaper one---i.e., $p_1$.

Now, assume that $p_1$ is split into $p_a$ and $p_b$ such that $c(p_a) = 51$, and $p_b = 99$. The voters' preferences look as follows:

\begin{center}
   \begin{tabular*}{0.9\linewidth}{@{\extracolsep{\fill}}@{}lccc@{}}
    \toprule
      \# votes  & $p_a$ & $p_b$ & $p_2$ \\ 
    \midrule
    $150$  & 0 & $\nicefrac{1}{3}$ & $\nicefrac{2}{3}$ \\ 
    $50$   & $1$ & 0 & 0 \\
    \bottomrule
   \end{tabular*}
\end{center}
After the split, $p_a$ is not eligible by transfers; $p_2$ will be selected, leaving no room for $p_a$ nor $p_b$.
\end{proof}

The next axiom is analogous to the previous one, yet it describes merges among the projects rather than splits.  

\begin{definition}[Merging monotonicity]
An aggregation method~$\mathcal{R}$ satisfies \emph{merging monotonicity} if for each budgeting scenario $E = (P, V, c, L)$,  each $B' \subseteq \mathcal{R}(E)$, and for each scenario $E' = (P \setminus B' \cup \{b'\}, V, c', L)$ such that $b'$ is a new project which costs $c(B')$ and such that for each voter $v_i$ we have that $v_i(b') = \sum_{b \in B'} v_i(b)$, it holds that $b' \in \mathcal{R}(E')$.
\end{definition}

That is, merging monotonicity requires that a project formed by merging a number of funded projects is funded.

\begin{theorem}
  GS  satisfies merging monotonicity,
  while EwT, EwTC, MT, MTC, GSC fail merging monotonicity.
\end{theorem}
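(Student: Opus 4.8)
The plan is to establish the single positive claim for GS by a direct budget-accounting argument, and to refute each of the five remaining rules by an explicit counterexample. I would treat GS first, since its proof is self-contained, and then organize the negative results according to the priority function each rule uses (support-over-cost for GSC, EwTC, MTC; excess for EwT, MT).

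For GS, fix a scenario $E$, a set $B' \subseteq \mathcal{R}(E)$ of funded projects, and the merged scenario $E'$ with merged project $b'$. The key observation is that $\mathrm{support}(b') = \sum_{b \in B'} \mathrm{support}(b) \geq \max_{b \in B'} \mathrm{support}(b)$, so under the support ranking $f_{\mathrm{GS}}$ the project $b'$ appears no later than the highest-ranked member of $B'$. Let $A$ be the set of projects processed strictly before $b'$ in $E'$; every $q \in A$ satisfies $\mathrm{support}(q) \geq \mathrm{support}(b') \geq \mathrm{support}(b)$ for all $b \in B'$, so $A$ is contained in, and forms an initial segment of, the projects that precede all of $B'$ in the ranking for $E$. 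Because $A$ contains neither $b'$ nor any member of $B'$, the greedy funding decisions, and hence the budget consumed while scanning $A$, are identical in $E$ and $E'$; call the leftover budget $R$. First I would show $c(b') = c(B') \leq R$: in $E$ all of $B'$ is funded and, having support at most $\mathrm{support}(b')$, every member of $B'$ is processed after $A$, so the total cost of the projects funded after $A$ in $E$, which includes $c(B')$, is at most $R$. Hence when GS reaches $b'$ in $E'$ it still fits, and $b' \in \mathcal{R}(E')$. (Ties are resolved by the fixed tie-breaking rule, and the argument is stated directly for a general $B'$, so no reduction to two projects is needed.)

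For GSC the failure comes from \emph{ratio dilution}: the priority $f_{\mathrm{GSC}}(b')$ is a cost-weighted average of the priorities of the merged projects, so merging a cheap high-ratio project with a low-ratio one can push $b'$ below a third project $q$ in the ranking. I would build an instance where $p_1$ (cheap, high ratio) and $p_2$ (low ratio) are both funded while a heavier project $q$ just fails to fit, so that after merging, $q$ now outranks $b'$, is funded, exhausts the budget, and blocks $b'$. A concrete witness, with $L=10$ and ten voters each endowed with one coin, uses $c(p_1)=1$, $c(p_2)=5$, $c(q)=10$ and supports $2$, $1$, $7$ respectively: originally $\{p_1,p_2\}$ is funded and $q$ does not fit, whereas $b'$ has ratio $\nicefrac{3}{6}<\nicefrac{7}{10}$, so $q$ is processed and funded first and leaves no room for $b'$.

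For the four CSTV rules I would likewise exhibit explicit instances, but here the argument must trace the elimination-and-transfer dynamics step by step, and this bookkeeping is the main obstacle. For the cost-based variants EwTC and MTC the guiding idea is the same ratio dilution as for GSC, arranged so that after the merge the combined $\mathrm{support}(b')/c(b')$ ratio drops enough that $b'$ is eliminated (EwTC) or never becomes eligible by transfers (MTC), in the spirit of the constructions used in the preceding splitting theorem. For the excess-based variants EwT and MT the relevant effect is instead the \emph{indivisibility and cost inflation} introduced by merging: since a project's eligibility threshold is measured against its cost while the budget shrinks as projects are funded, the larger cost $c(b')=c(B')$ can make $b'$ the project of minimal $\mathrm{excess}$ (so it is eliminated by EwT) or keep it from ever reaching the eligibility-by-transfers threshold (so MT never selects it), while the two cheaper pieces were each fundable in $E$; I would also verify that the inclusive-maximality postprocedure (RE for EwT, AUP for MT) cannot recover $b'$ within the remaining budget. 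In every case the verification reduces to identifying, at each step, the project with minimal $\mathrm{excess}$ or minimal support-over-cost and following the proportional transfers until the budget is spent, and the delicate part is choosing the costs and supports so that the merged project is crowded out even though all of $B'$ was funded before the merge.
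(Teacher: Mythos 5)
Your GS argument is correct and is essentially the paper's argument made more rigorous: the paper observes that merging does not change the order of the other projects and that the merged project is considered no later than its earliest-considered component, at which point enough budget remains; your prefix-$A$ accounting (identical greedy decisions on the common prefix, leftover budget $R \geq c(B')$ because all of $B'$ is funded after $A$ in $E$) is the same idea, stated directly for a general $B'$. Your GSC counterexample is also correct and complete: with costs $1,5,10$, supports $2,1,7$, and $L=10$, GSC funds $\{p_1,p_2\}$, while after the merge $q$ outranks $b'$ (ratio $\nicefrac{3}{6}$ vs.\ $\nicefrac{7}{10}$), is funded, and exhausts the budget. This is the same ratio-dilution mechanism as the paper's three-project instance.

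The genuine gap concerns the remaining four rules. For EwT, EwTC, MT, and MTC you supply only a strategy---ratio dilution for the cost-based variants, loss of eligibility by transfers due to cost inflation for the excess-based ones---and explicitly defer the constructions, conceding that the elimination-and-transfer bookkeeping ``is the main obstacle.'' But those constructions \emph{are} the proof for these four rules: a negative claim is established only by an explicit instance together with a trace showing the merged project is not selected. The paper's own examples show this is not a routine detail: for MT/MTC one needs five projects and voter blocks arranged so that the merged project's two disjoint supporter pools jointly hold $19$ units against a cost of $20$ (killing eligibility by transfers) while a rival $q$ stays eligible; for EwT/EwTC one needs five projects with costs tuned differently for the two variants ($c(q)=35$ versus $c(q)=30$) so that the merge reroutes an elimination transfer from $q$ to $r$, which then crowds out the merged project, and one must additionally verify that the RE postprocedure cannot rescue it. Your stated intuitions do match the mechanisms the paper exploits, so the plan is salvageable, but as written the proposal proves only two of the six claims in the theorem.
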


\begin{proof}
\textbf{MT, MTC}: Consider an instance with 5 projects, $p$, $q$, $r$, $s$, and $z$ with all the costs equal to 10. The first 10 voters put the value $0.5 + \epsilon$ to $p$ and $0.5 - \epsilon$ to $q$. The next 9 voters put $0.5 + \epsilon$ to $r$ and $0.5 - \epsilon$ to $s$. The last voter puts 1 to $z$. There are $n = 10 + 9 + 1 = 20$ voters; the budget is $L = 20$.

Since $p$ and $q$ are eligible by transfers, both MT and MTC will select $p$ first. For that, most of $q$'s money will be transferred to $p$ and as a result both rules will select $r$ in the second iteration. 
Now, assume that $p$ and $r$ are merged into $x$. The merged project is no longer eligible by transfers, but $q$ is still. It will be selected, and there will be no money left in the budget to buy~$x$.

\textbf{EwT, EwTC}: Consider the following instance with 5 projects, $p, q, r, s$, and $z$. The costs of $p$, $s$, and $z$ are equal to 30; the cost of $r$ is equal to $40$. If we consider EwT, then we set the cost of $q$ to 35; if we consider EwTC, we set $c(q) = 30$. 
The first 30 voters assign value 1 to $p$. The next voter assigns value $1 -\epsilon$ to $p$ and $\epsilon$ to $q$. The next 15 voters assign $1-\epsilon$ to $q$ and $\epsilon$ to $r$; the next 40 voters assign $0.5$ to $r$ and $0.5$ to $s$; the remaining $9$ voters assign 1 to $z$. The number of voters is $95$ and the budget is $L = 95$. The cumulative ballots of the voters are summarized in the table below.

\begin{center}
   \begin{tabular*}{0.9\linewidth}{@{\extracolsep{\fill}}@{}lccccc@{}}
    \toprule
      \# votes  & $p$ & $q$ & $r$ & $s$ & $z$\\ 
    \midrule
    $30$  & $1$ & $0$ & $0$ & $0$ & $0$ \\ 
    $1$   & $1-\epsilon$ & $\epsilon$ & 0 & $0$ & $0$ \\
    $15$  & 0 & $1-\epsilon$ & $\epsilon$ & 0 & $0$  \\
    $40$  & 0 & 0 & $0.5$ & $0.5$ & $0$  \\
    $9$  & 0 & $0$ & 0 & $0$ & $1$  \\
    \bottomrule
   \end{tabular*}
\end{center}

Here, both EwT and EwTC will select $p$ first. The value of the 31st voter will be transferred to $q$. Then, the support for $q, r$, and $s$ will be $16 - 15\epsilon$, $20 + 15\epsilon$, and $20$, respectively. Thus, $z$ and $r$ will be eliminated next; there will be enough money to accommodate the remaining two projects, thus $p, q$ and $s$ will be selected.   

Now assume that $p$ and $s$ are merged into $x$: it does not reach the threshold. Indeed, the support of projects $x, q$ and $r$ will be $51-\epsilon$, $15 - 14\epsilon$, and $20 + 15\epsilon$, respectively. Thus, $z$ and $q$ will be eliminated next. The total value of $15 - 15\epsilon$ will be transferred from $q$ to $r$, raising its support to $35$. Thus, $x$ will be eliminated next, and $r$ will be chosen, leaving no money in the budget for~$x$.

\textbf{GSC}: Consider an instance with 3 projects $p_1$, $p_2$, and $p_3$ with the costs equal to, respectively, 5, 10, and 5. The budget is $L = 10$. There are 10 identical voters, who assign value 0.35 to $p_1$, $0.6$ to $p_2$ and $0.05$ to $p_3$. GSC selects $p_1$ first, and then there will be no money for $p_2$. Consequently, $p_1$ and $p_3$ will be selected. On the other hand, if we merge $p_1$ and $p_3$, then the rule will select $p_2$.

\textbf{GS}: Observe that merging two projects does not affect the order of consideration of the other projects apart from the merged ones. Assume we merged $p_a$ and $p_b$ into $p$ and that $p_b$ was considered after $p_a$ by the rule. Thus, $p$ will be considered in the same or earlier iteration as $p_a$, and there will be enough money to accommodate $p$ (since at this moment, before merge, there was enough money to add $p_a$ and $p_b$). 
\end{proof}


Our next axiom, support monotonicity, requires that moving more support to a funded project does not hurt it. In the formulation below $\triangle$ stands for symmetric difference.

\begin{definition}[Support monotonicity]
An aggregation method~$\mathcal{R}$ satisfies \emph{support monotonicity} if for each budgeting scenario $E = (P, V, c, L)$, each project $p \in \mathcal{R}(E)$, and each budgeting scenario $E' = (P, V', c, L)$ such that $|V \triangle V'| = 1$ and for the single voter $v \in V \triangle V'$ and the single voter $v' \in V' \triangle V$ it holds that (1) $v'(p) > v(p)$ and (2) for each $p' \neq p$, $v'(p') \leq v(p')$, then $p \in \mathcal{R}(E')$. 
\end{definition}

\begin{theorem}\label{thm:support_monotonicity}
  GS, GSC, EwT, EwTC, MT, and MTC fail support monotonicity.
\end{theorem}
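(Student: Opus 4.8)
The plan is to refute support monotonicity for each of the six rules by an explicit pair of budgeting scenarios $E=(P,V,c,L)$ and $E'=(P,V',c,L)$ that differ in a single voter, with $p\in\mathcal{R}(E)$, $p\notin\mathcal{R}(E')$, and the changed voter shifting mass onto $p$ while (weakly) withdrawing it from every other project. I would group the constructions: one idea handles the two greedy rules GS and GSC, and a second, STV-flavored idea handles the four transfer-based rules EwT, EwTC, MT, and MTC.

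For GS and GSC the failure is driven by the non-monotonicity of the greedy knapsack sweep under reordering of the priority list. Because ballots are normalized, raising $v(p)$ forces the perturbed voter to withdraw mass from some other project $a$; I would choose costs and supports so that this withdrawal drops $a$ just below a strictly more expensive project $b$ in the priority order, while keeping $p$ below both $a$ and $b$. In $E$ the order $a,b,p$ lets the cheap $a$ be funded, $b$ be skipped for want of budget, and $p$ then fit exactly; in $E'$ the order becomes $b,a,p$, so the expensive $b$ is funded first, exhausts the budget, and $p$ no longer fits. The only delicate point is that a single voter can move at most total mass $1$, so the support gap between $a$ and $b$ must be made smaller than the shift --- this is arranged by spreading the profile over sufficiently many voters. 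The same reordering template works for GSC with support-over-cost priorities in place of raw support.

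For EwT, EwTC, MT, and MTC the plan mirrors the classical monotonicity failure of STV. I would build instances in which $p$ is funded in $E$ only after a particular sequence of selections, eliminations, and proportional transfers, and then raise $p$'s support so as to reorder that sequence. The crucial lever is that $\mathrm{support}(q)=L\cdot(\sum_j v_j(q))/n$ scales with the \emph{shrinking} budget $L$: if the reordered cascade lets a costlier rival be funded before $p$ --- either in the main loop, or, in the Reverse-Eliminations / AUP post-processing, as an add-back ahead of $p$ --- then $L$ drops, every remaining support drops with it, and $p$ either falls below its cost or no longer fits the residual budget, even though its raw votes increased. For MT and MTC I would additionally exploit a flip of the eligibility-by-transfers status of the rival, reusing the gadget from the merging-monotonicity construction. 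Each instance is verified by running the loop explicitly in $E$ and $E'$ and pointing to the iteration at which the two runs diverge.

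I expect the main obstacle to be the bookkeeping for the four transfer-based rules: one must follow the proportional redistribution of excess across iterations, recompute eligibility against the decreasing $L$ at each step, and confirm that the single-voter perturbation --- constrained by $\sum_{p'} v_j(p')=1$ --- is \emph{exactly} enough to flip the critical ordering without accidentally making $p$ eligible earlier, which would instead keep it funded. The greedy cases, by contrast, reduce to a one-line check once the reordering instance is fixed.
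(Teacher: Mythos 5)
Your treatment of GS and GSC is essentially the paper's own proof. The paper's GS instance uses costs $(4,4,8)$ for $(p,q,r)$, budget $10$, and ten identical voters with ballot $(\nicefrac{1}{3}+2\epsilon,\ \nicefrac{1}{3}-3\epsilon,\ \nicefrac{1}{3}+\epsilon)$: shifting mass from $p$ onto the funded project $q$ lets the expensive, untouched $r$ overtake $p$ in the priority order, get funded first, and exhaust the budget, so $q$ drops out --- exactly your ``$a,b,p$ becomes $b,a,p$'' template, carried over to GSC analogously. Indeed, your explicit attention to the single-voter constraint is cleaner than the paper itself, whose examples let \emph{all} voters shift $2\epsilon$; with identical voters this is repaired by having one voter make the whole aggregate shift.

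The genuine gap is in the four transfer-based rules. The theorem is an existence claim, so the proof for EwT, EwTC, MT and MTC \emph{is} the four concrete instances together with a traced run of each algorithm on $E$ and $E'$ --- and that is precisely what you defer as ``bookkeeping.'' A description of mechanisms that might produce a violation is not a counterexample, especially since the feasibility question you yourself raise (can a single voter's shift, constrained to move mass only toward the distinguished project, flip the cascade without making that project eligible earlier?) is the entire difficulty. Moreover, the lever you single out --- supports rescaling with the shrinking budget after a costlier rival is funded --- is not what the paper's EwT/EwTC counterexample exploits. There, all three projects cost $10$ and the budget is $12$, so only one project can ever win and no budget-shrinkage effect is available: eight voters hold $(0.5-\epsilon,\ 0.25,\ 0.25+\epsilon)$ on $(p,q,r)$, two voters put everything on $q$ and two on $r$. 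In $E$ the eliminations run $p$ then $q$, so $r$ wins; after support is shifted from $q$ toward the winner $r$, the first elimination flips to $q$, whose support is then redistributed \emph{proportionally to the initial ballots}, i.e.\ roughly $2{:}1$ in favor of $p$, so $p$ overtakes $r$ and wins instead. All the work is done by the redirection of the proportional transfers after a changed elimination order; nothing in your sketch isolates this lever, and without written-down instances there is no way to verify that your alternative cascades (shrinking $L$, eligibility-by-transfers flips) can actually be realized for each of the four rules.
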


\begin{proof}
\textbf{EwT, EwTC}: Consider the following instance with 3 projects, $p, q, r$. The costs of all projects are equal to 10. The first 8 voters put $0.5 - \epsilon$ to $p$, $0.25$ to $q$ and $0.25 + \epsilon$ to $r$. The next 2 voters put 1 to $q$ and the next $2$ to put $1$ to $r$. There are 12 voters and the budget is $L = 12$.

Here, $p$ is eliminated first, $q$ is eliminated second, and so $r$ is selected for funding.

Now, assume that all the voters move $2\epsilon$ from $q$ to $r$. Now, $q$ is eliminated first and its value is redistributed among $p$ and $r$ in proportions $2:1$. Thus, $r$ is eliminated second, and so $p$ is the project selected for funding.

\textbf{MT}: Assume we have 3 projects, $p, q$ and $r$, with the costs equal to $4$, $4$ and $8$, respectively. There are 10 identical voters who put utility $0.25+\epsilon$ on $p$,  $0.1$ on $q$, and $0.65-\epsilon$ on $r$. The budget is $L = 10$.

MT will select $p$ first. Since there will be no money left for $r$, $q$ will be selected as a second project for funding. Now, assume that each voter moves $2\epsilon$ from $p$ to $q$. Now, MT selects $r$ first and there will be no money left for $q$.

\textbf{GS}: The instance is similar as for MT. The projects have the same costs but the voters put $\nicefrac{1}{3}+2\epsilon$ on $p$, $\nicefrac{1}{3} + \epsilon$ on $r$ and $\nicefrac{1}{3}-3\epsilon$ on $q$. Here GS will select $p$ and $q$. Now assume that the voters $2\epsilon$ from $p$ to $q$. After such a change GS will select $r$.

\textbf{MTC, GSC}: The constructions here are analogous to MT and GS.
\end{proof}

Theorem~\ref{thm:support_monotonicity} suggest the following interesting open question: does there exist a proportional rule for PB that satisfies our three monotonicity axioms, in particular support monotonicity? Does there exist an impossibility result suggesting that the axioms we consider in the paper are incompatible?

\subsection{Proportional Representation}

Next we consider proportionality as it is usually desired for PB applications (see, e.g.,~\cite{aussieone}); in particular, we see the lack of proportionality of the greedy rules -- as we show below -- as their major drawback.
Specifically, we introduce three axiomatic properties, of Weak-PR, PR, and Strong-PR. These axioms are new to the paper, but similar properties have been considered in the context of participatory budgeting for different types of voters' preferences~\cite{aussieone,fainMunSha18:core_and_pb}.


\begin{definition}[Weak Proportional Representation]
An aggregation method~$\mathcal{R}$ satisfies \emph{Weak Proportional  Representation (Weak-PR)} if for each budgeting scenario $E = (P, V, c, L)$, for each $\ell \in [L]$, each set $V' \subseteq V$ of voters with $|V'| \geq \ell n / L$, and each set $P' \subseteq P$ of projects with $c(P') \leq \ell$, there exist a scenario $E'$ which differs from $E$ only in the votes of the voters from $V'$, such that $P' \subseteq \mathcal{R}(E')$.
\end{definition}

%

\begin{theorem}\label{theorem:weakpr}
  GSC satisfies Weak-PR but GS fails it.
\end{theorem}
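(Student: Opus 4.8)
The plan is to treat the two claims separately, since they call for opposite kinds of argument: a \emph{universal} guarantee for GSC and a \emph{single counterexample} for GS.

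For the \textbf{GSC} part, I would fix an arbitrary budgeting scenario $E$, an $\ell \in [L]$, a coalition $V'$ with $|V'| \geq \ell n / L$, and a target set $P'$ with $c(P') \leq \ell$, and then exhibit a good manipulation. The profile $E'$ I propose is the one in which every voter in $V'$ splits her ballot across $P'$ \emph{proportionally to cost}, i.e.\ sets $v_j(p) = c(p)/c(P')$ for $p \in P'$ and $v_j(p) = 0$ otherwise, leaving the voters outside $V'$ untouched. The first step is to check that this makes every project of $P'$ ``eligible by ratio'': the coalition alone contributes support $\frac{L}{n}|V'|\frac{c(p)}{c(P')} \geq \ell\frac{c(p)}{c(P')} \geq c(p)$ to each $p \in P'$, using $|V'| \geq \ell n/L$ and $c(P') \leq \ell$; since outside voters can only add support, $f_{\mathrm{GSC}}(p) = \mathrm{support}(p)/c(p) \geq 1$ for every $p \in P'$.

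The crux is then a budget-counting observation that holds in \emph{any} profile: the supports sum to the whole budget, $\sum_p \mathrm{support}(p) = \frac{L}{n}\sum_j \sum_p v_j(p) = \frac{L}{n}\cdot n = L$. Consequently every project $q$ with $f_{\mathrm{GSC}}(q) \geq 1$ satisfies $c(q) \leq \mathrm{support}(q)$, so summing over all such projects yields $\sum_{q:\, f_{\mathrm{GSC}}(q)\geq 1} c(q) \leq \sum_q \mathrm{support}(q) = L$. Because GSC processes projects in descending order of $f_{\mathrm{GSC}}$, all ratio-$\geq 1$ projects are examined before any ratio-$<1$ project, and since their total cost is at most $L$ each of them still fits the remaining budget when reached; hence GSC funds \emph{all} of them. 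As $P' \subseteq \{q : f_{\mathrm{GSC}}(q)\geq 1\}$, we obtain $P' \subseteq \mathcal{R}(E')$, which is exactly Weak-PR.

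For the \textbf{GS} part, I would exhibit one scenario where the cost-blindness of $f_{\mathrm{GS}}$ defeats every manipulation. Take $n = L = 10$ and $\ell = 1$, so the required coalition size is $|V'| \geq \ell n/L = 1$; let the single coalition voter target $P' = \{p\}$ with $c(p) = 1 \leq \ell$. Introduce a rival project $q$ with $c(q) = 10$ and let the remaining $9$ voters place their entire weight on $q$. Then $f_{\mathrm{GS}}(q) \geq 9$ no matter how the coalition votes, whereas the coalition can raise $f_{\mathrm{GS}}(p)$ to at most $1$; thus $q$ always precedes $p$, is funded first, exhausts the budget, and leaves no room for $p$. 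Hence for every profile differing from $E$ only on $V'$ we have $p \notin \mathcal{R}(E')$, so GS fails Weak-PR. I expect the only genuinely delicate point to be the GSC half, specifically the identity $\sum_p \mathrm{support}(p) = L$ together with the fact that it bounds the total cost of all ratio-$\geq 1$ projects; once that is in place the greedy funding guarantee is immediate, and the GS half is routine once the cost-blind blocking project $q$ is set up.
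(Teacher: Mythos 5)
Your proof is correct. The GS half is essentially the paper's own counterexample with different numbers: in both, a rival project costs the entire budget and draws more support from the voters outside $V'$ than the coalition can ever give to $P'$, so it is ranked first, funded, and blocks $P'$ no matter how $V'$ votes. The GSC half starts from the same manipulation as the paper (each coalition voter splits her ballot over $P'$ in proportion to cost, making every $p \in P'$ satisfy $\mathrm{support}(p)/c(p) \geq 1$), but your finishing argument is genuinely different. The paper bounds the total cost of the \emph{competing} projects: any project outside $P'$ with raw support-over-cost exceeding $n/L$ is financed only by the at most $n - \ell n/L$ units of ballot mass held by voters outside $V'$, so such projects cost at most $L - \ell$ in total, leaving $\ell \geq c(P')$ of budget for the coalition's projects. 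You instead use the global identity $\sum_p \mathrm{support}(p) = L$ (each of the $n$ ballots sums to $1$ and is scaled by $L/n$) to bound the total cost of \emph{all} projects with $\mathrm{support}(q) \geq c(q)$ --- a set that contains $P'$ --- by $L$, and observe that GSC examines exactly these projects before any others, so each one fits the remaining budget when reached and all of them are funded. Your route proves something slightly stronger (every eligible project is selected under this profile, not just those in $P'$) and is insensitive to tie-breaking among projects sitting exactly at the threshold ratio, a boundary case that the paper's strict inequality ``greater than $n/L$'' passes over (though its counting argument extends to ties verbatim); the paper's route, in exchange, makes the split of the budget between the outside voters ($L-\ell$) and the coalition ($\ell$) explicit. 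Both are valid proofs of Weak-PR for GSC.
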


\begin{proof}
\textbf{GS}:
Consider a scenario with $P = \{a, b\}$, $c(a) = 1$, $c(b) = 3$, $L = 3$, and voters $v_1$, $v_2$, and $v_3$, where voter $v_1$ supports only $a$, and $v_2$ and $v_3$ support only project $b$.
For $\ell = 1$, voter $v_1$ acts as a set $V'$ of $|V'| \geq \ell n / L = 1$ and $P' = \{a\}$ acts as a set of projects with $c(P') \leq \ell$; GS, however, selects only project $b$, as project $b$ has higher support and, after $b$ is funded, no funds are left to fund project $a$.

\textbf{GSC}:
Let $\ell \in [L]$. Consider a set $V'$ of voters with $|V'| \geq \ell n / L$ and a set $P' \subseteq P$ of projects with $c(P') \leq \ell$. Set the vote of each $v' \in V'$ to support only the projects in $P'$, proportionally:
  i.e., for each $p' \in P'$ set $v'(p') = c(p') / c(P')$ and $v'(p) = 0$ for each $p \notin P'$. 
Now, the sum of support each project $p' \in P'$ gets from the voters is at least $|V'| \cdot c(p') / c(P')$, as it gets this amount already from the voters in $V'$. 
As $|V'| \geq \ell n / L$, we have that the sum of support of each project $p' \in P'$ is at least $\ell n / L \cdot c(p') / c(P')$; furthermore, since $c(P') \leq \ell$, it follows that this sum of support is at least $n / L \cdot c(p')$.

GSC ranks projects according to their sum of support over their cost, so the ``support over cost'' value of each $p' \in P'$ is at least $n / L$.
We wish to upper bound the total cost of projects $p \notin P'$ which get a ``support over cost'' value greater than $n / L$:
  The proof will follow by showing that the total cost of such projects is at most $L - \ell$,
  because then, GSC would fund all projects $p' \in P'$.
To show this, assume otherwise, that the total cost of projects $p \notin P'$ with ``support over cost'' value greater than $n / L$ is more than $L - \ell$, call the set of these projects $S$.

Observe that the number of voters $v \notin V'$ is at most $n - \ell n / L$. Thus, the total support divided by the total cost must be lower than 
$\frac{n - \frac{\ell n}{L}}{L - \ell} = \frac{n}{L}$; hence, contradiction.
\end{proof}

Intuitively, the three axioms that we consider in this section differ in how much synchronization is needed among the members of a group of voters in order ensure that these voters will be able to decide about a certain fraction of the budget.
Weak-PR ensures that a group of at least $\ell n / L$ can (by coordinating) make a set of projects $P'$ of total cost $\ell$ funded; PR merely requires that such voters support the same set $P'$.

\begin{definition}[Proportional Representation]
An aggregation method~$\mathcal{R}$ satisfies \emph{Proportional Representation (PR)} if for each budgeting scenario $E = (V, P, c, L)$,  each $\ell \in [L]$, each $V' \subseteq V$ with $|V'| \geq \ell n / L$, and each set $P' \subseteq P$ of projects with $c(P') \leq \ell$, it holds that:
  If all voters $v' \in V'$ support all projects in $P'$, and no other projects,
  then $P' \subseteq \mathcal{R}(E)$.
\end{definition}

\begin{theorem}\label{theorem:pr}
  EwT and EwTC satisfy PR but GSC fails it.
\end{theorem}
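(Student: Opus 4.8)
The plan is to prove the two positive claims (that EwT and EwTC satisfy PR) by one common invariant argument, and to refute the GSC claim by an explicit counterexample. For the positive part the decisive structural fact is that each voter in $V'$ places all of her support inside $P'$ and nowhere else; hence whenever such a voter moves support---whether as redistributed excess after a project is funded, or when one of her projects is eliminated---the recipients are again projects of $P'$, so the ``coins'' of $V'$ never leak out of $P'$. Reading the support formula with the fixed budget $L$ (so that $\mathrm{support}(p)$ equals the money currently sitting on $p$, exactly as in the worked examples of the previous subsection), I track $K=\tfrac{L}{n}\sum_{v_j\in V'}\sum_{p\in P'\setminus S}v_j(p)$, the total $V'$-money on the not-yet-funded members of $P'$, and aim to maintain the invariant $K\ge c(P'\setminus S)$. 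It holds initially because $|V'|\ge \ell n/L$ and $c(P')\le\ell$ give $K=|V'|L/n\ge \ell\ge c(P')$.

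Next I would check that each loop step preserves the invariant. Funding or eliminating a project $q\notin P'$ leaves every $V'$-coin untouched---the $V'$ voters are not among $q$'s supporters, so they neither pay toward $q$ nor receive any of its redistributed value---while $c(P'\setminus S)$ is unchanged, so the invariant survives. Funding a project $p\in P'$ decreases $c(P'\setminus S)$ by exactly $c(p)$; simultaneously the only $V'$-money leaving the pool is these voters' share of the payment for $p$, which is at most the whole cost $c(p)$, since the complementary part $(1-\gamma)v_j(p)$ carried by each transferring $V'$ voter is sent (proportionally, via the common factor $\gamma$) to the other projects of $P'$ she also supports, i.e.\ to the surviving members $P'\setminus(S\cup\{p\})$. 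Thus $K$ falls by at most $c(p)$, matching the fall of $c(P'\setminus S)$. The sole mechanism that destroys coins---a voter whose last surviving project is eliminated---cannot touch $V'$ as long as no project of $P'$ is eliminated, which is exactly what I establish next.

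The invariant now yields an averaging argument that forbids eliminations of $P'$-projects. Whenever the loop is about to eliminate, no project is eligible, so every project has $\mathrm{support}(p)<c(p)$; but the invariant gives $\sum_{p\in P'\setminus S}\mathrm{support}(p)\ge K\ge c(P'\setminus S)=\sum_{p\in P'\setminus S}c(p)$, forcing some $p^\star\in P'\setminus S$ with $\mathrm{support}(p^\star)\ge c(p^\star)$---an eligible project, a contradiction. Hence, while $P'\setminus S\neq\emptyset$, the loop always has an eligible project to fund (and funding is feasible, as an eligible project's cost never exceeds the money on it, which never exceeds the remaining budget), and never eliminates a member of $P'$. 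Since the procedure is finite and no $P'$-project is ever eliminated, every project of $P'$ is eventually funded, i.e.\ $P'\subseteq\mathcal{R}(E)$. Crucially this argument does not depend on the GE-versus-GSC priority or on the elimination tie-rule: the priority decides only \emph{which} eligible project is funded, never \emph{whether} one exists, so it covers EwT and EwTC at once.

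For the negative claim I would exhibit uneven support inside $P'$ that lets a foreign project crowd out a member of $P'$ under GSC. Take $n=3$ and projects $a,b,d$ with $c(a)=1$, $c(b)=c(d)=3$, budget $L=6$, and put $\ell=4$, $P'=\{a,b\}$ (so $c(P')=4\le\ell$) and $V'=\{v_1,v_2\}$ (so $|V'|=2=\ell n/L$). Let $v_1,v_2$ assign $0.9$ to $a$ and $0.1$ to $b$---supporting all of $P'$ and nothing else---while $v_3$ assigns $1$ to $d$. With $L/n=2$ the support-over-cost scores are $3.6$ for $a$, $\tfrac{2}{3}$ for $d$, and $\tfrac{2}{15}$ for $b$, so GSC funds $a$ and then $d$, leaving budget $2<c(b)$ and skipping $b$; hence $P'\not\subseteq\mathcal{R}(E)$ and PR fails. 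The main obstacle in the whole proof lies in the positive part's bookkeeping---fixing the right reading of $L$ in the support formula and verifying that the single proportional factor $\gamma$ keeps every displaced $V'$-coin inside $P'$---after which the averaging step is immediate.
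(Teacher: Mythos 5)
Your proposal is correct and follows essentially the same route as the paper: the positive part is the same invariant-plus-averaging argument (the $V'$ voters' money never leaks outside $P'$, so the total support on unfunded members of $P'$ always covers their total cost, forcing an eligible project and precluding eliminations), and the negative part is the same crowding-out mechanism for GSC, just with different concrete numbers than the paper's $\epsilon$-based example. Your write-up is in fact slightly more careful on two points the paper glosses over---fixing the money interpretation of $\mathrm{support}$ under the original budget $L$, and bounding the $V'$ share of the payment by $c(p)$ rather than asserting an exact decrease.
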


\begin{proof}
\textbf{GSC}:
Intuitively, GSC fails PR because after one project in $|P'|$ is funded, its excess gets lost, which might cause the other projects in $P'$ not funded.

More formally,
consider a budgeting scenario with $P = \{a, b, c\}$ where $c(a) = c(b) = 1$ and $c(c) = 3$, the budget limit $L = 4$, and voters $v_1$ and $v_2$, where voter $v_1$ assigns to project $a$ value $1 - \epsilon$ and to project $b$ value $\epsilon$, and voter $v_2$ assigns to project $c$ value $1$.

According to PR, with $V' = \{v_1\}$, $\ell = 2$, and $P' = \{a, b\}$, we have that indeed both $a$ and $b$ shall be funded. GSC, however, will choose the bundle $\{a, c\}$.

\textbf{EwT, EwTC}:
Fix $\ell \in [L]$. Let $P'$ be a set of projects with $c(P') \leq \ell$. Let $V'$ be a group of voters who all support all projects from $P'$ and no other projects; assume $|V'| \geq \ell n / L$.

Recall that we define the support of a project $p$ as:
\begin{align*}
\mathrm{support}(p) = L \cdot \frac{\sum_{j=1}^n v_j(p)}{n} \text{\ .}
\end{align*}
Let $S_i$ be the set of projects picked by the rule (EwT or EwTC) up to the $i$th iteration, inclusive.  
We prove the following invariant: In the $i$th step the total support the voters from $V'$ assign to the projects from $P' \setminus S_i$ equals at least $\ell - c(P' \cap S_i)$ and no project from $P'$ has been eliminated by the rule. The invariant is clearly true when the rule begins. Now, assume it is true after the $i$th iteration, and we will show that it must hold after the $(i+1)$th iteration as well. Observe that in the $(i+1)$th iteration the total support of the candidates from $P'$ equals at least:
\begin{align*}
\ell - c(P' \cap S_i) \geq c(P') - c(P' \cap S_i) = c(P' \setminus S_i) \text{ .}
\end{align*}
Thus, there must exists at least one project, support of which exceeds the cost, thus no project from $P'$ can be eliminated. Furthermore, if a project $p' \in P'$ is selected, then the amount of support that the voters from $V'$ assign to the projects from $P'$ will decrease by $c(p')$ (the exceed will be transferred only to the projects from $P'$, unless all of them are already selected). This proves the invariant.
Since no project from $P'$ will be eliminated, all of them will be picked by the rule.
\end{proof}

Below is our strongest proportionality axiom, in which we relax the requirement that $c(P') < \ell$.  According to Strong-PR, the voters in groups do not have to strongly synchronize to get projects that they like: intuitively, they only need to agree on the set of those projects that get a positive support. 

\begin{definition}[Strong Proportional Representation]
An aggregation method~$\mathcal{R}$ satisfies \emph{Strong Proportional Representation (Strong-PR)} if for each scenario $E = (P, V, c, L)$, each $\ell \in [L]$, each $V' \subseteq V$ with $|V'| \geq \ell n / L$, and each $P' \subseteq P$, it holds that:
  If all voters $v' \in V'$ support all projects in $P'$, and not any other project,
  then either $P' \subseteq \mathcal{R}(E)$ or for each $p \in P' \setminus \mathcal{R}(E)$ we have that $c(p) + c(P' \cap \mathcal{R}(E)) > \ell$.
\end{definition}

\begin{theorem}\label{theorem:strongpr}
  MT and MTC satisfy Strong-PR but EwT and EwTC fail it.
\end{theorem}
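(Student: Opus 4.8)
The plan is to prove the two directions separately: establish that MT and MTC satisfy Strong-PR by a structural invariant on the guaranteed group, and refute Strong-PR for EwT and EwTC with explicit counterexamples that exploit elimination. For the positive direction, fix the group $V'$ (so its voters jointly control support $|V'|\cdot(L/n)\ge\ell$) and the set $P'$, which by assumption is the only set of projects the voters of $V'$ support. The first thing I would observe is that, since no voter of $V'$ ever places support outside $P'$, and since every transfer and every excess redistribution moves a voter's support only among projects she already backs, the entire $V'$-support stays confined to $P'$ throughout the run, leaving the system only when a project of $P'$ is paid for. Because MT makes a project exactly eligible before funding it (support equal to cost) and the eligible branch first trims the excess down to the cost, each funded $p\in P'$ consumes at most $c(p)$ units of support in total, hence at most $c(p)$ units of $V'$-support. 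This yields a first invariant: after the main loop has funded $F\subseteq P'$, the $V'$-support remaining on $P'\setminus F$ is at least $\ell-c(F)$.

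The crux is a second invariant: every voter of $V'$ who still holds positive support supports every not-yet-funded project of $P'$. This holds initially by assumption, and I would check it is preserved by each operation. Excess redistribution only adds support to projects a voter already backs, so it cannot zero anything; a proportional transfer toward a project $q$ either scales a voter's remaining supports by a positive factor, or, in the capped case $v_j(q)/r\ge\sum_\ell v_j(p_\ell)$, moves \emph{all} of her support onto $q$, after which, once $q$ is funded, that voter drops out with zero support. Granting both invariants, consider the moment the main loop halts: no project is eligible by transfers, i.e.\ every $p$ has $(L/n)\sum_{j:\,v_j(p)>0}\sum_\ell v_j(p_\ell)<c(p)$. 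For an unfunded $p\in P'$, the second invariant forces the sum over $j$ with $v_j(p)>0$ to include \emph{all} surviving $V'$-voters, so its left-hand side is at least the total remaining $V'$-support on $P'\setminus F$, which by the first invariant is at least $\ell-c(F)$. Hence $c(p)+c(F)>\ell$. Since the AUP postprocedure only adds projects, $P'\cap\mathcal{R}(E)\supseteq F$, so for every $p\in P'\setminus\mathcal{R}(E)$ we get $c(p)+c(P'\cap\mathcal{R}(E))\ge c(p)+c(F)>\ell$, which is exactly Strong-PR. I would note this argument is insensitive to the project-to-fund tie-breaking, so it covers MT and MTC at once.

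For the negative direction I would build a scenario showing how elimination destroys the guarantee. The idea is to let $V'$ support a cheap project $a$ (with $c(a)\le\ell$) together with a very expensive project $d$, so that $c(P')>\ell$ and a violation with $P'\cap\mathcal{R}(E)=\emptyset$ occurs unless $a$ is funded. Tuning the votes so that $a$ has the most negative excess (or the smallest ratio $\mathrm{excess}(p)/c(p)$ for EwTC), $a$ is eliminated first; its $V'$-support is transferred onto $d$, which is too expensive ever to become eligible and is eventually eliminated as well, \emph{wasting} all that support. Crucially, since $a$ is eliminated first it sits last in the reverse-elimination order, so every other eliminated project---in particular a suitable collection of non-$P'$ projects rejected for lack of support---is reinstated ahead of $a$ by Reverse Eliminations, which funds by cost alone and ignores support. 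Choosing these fillers to exhaust the budget below $c(a)$ leaves $a$ unfunded while $d$ is skipped as unaffordable, giving $P'\cap\mathcal{R}(E)=\emptyset$ with $a$ unfunded and $c(a)\le\ell$.

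I expect the main obstacle to be the negative direction: one must simultaneously control the elimination order (placing $a$ first and keeping the expensive $d$ forever unfunded), keep every non-$P'$ filler rejected in the main loop yet affordable in RE, and arrange the budget arithmetic so the fillers consume just enough to block $a$---and this must be done separately for the $\mathrm{GE}$-based rule (EwT) and the $\mathrm{GSC}$-based ratio rule (EwTC). On the positive side the only delicate point is verifying the second invariant across the capped transfer case and confirming that the AUP phase, although it may fund undersupported projects, cannot undo the $c(p)+c(F)>\ell$ guarantee already secured when the main loop halts.
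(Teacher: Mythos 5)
Your positive direction (MT and MTC satisfy Strong-PR) is correct and is essentially the paper's own argument: the same invariant that the $V'$-support remaining on $P'\setminus F$ is at least $\ell - c(F)$ after funding $F\subseteq P'$, combined with the observation that every surviving $V'$-voter still supports every unfunded project of $P'$, so that an unfunded $p\in P'$ with $c(p)+c(F)\le\ell$ would still be eligible by transfers and the main loop could not have halted. The paper phrases this as a contradiction at the last iteration before the postprocedure phase; your direct contrapositive is the same argument, and your explicit ``second invariant'' is a step the paper invokes implicitly when it asserts that all voters of $V'$ support $p'$, so your write-up of this half is, if anything, slightly more careful.

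The gap is in the negative direction. For EwT and EwTC you give only a construction template --- a cheap $a$, an expensive $d$, a collection of filler projects, a desired elimination order, and unresolved ``budget arithmetic'' --- and you yourself flag this as the main obstacle. As submitted, no concrete budgeting scenario is exhibited and verified, so the claim that EwT and EwTC fail Strong-PR is not proven; a counterexample claim is only a proof once the instance is written down and its run checked. The mechanism you describe is exactly the right one, and it can be realized far more economically than your sketch suggests: the paper uses a single instance killing both rules at once, with two voters and three projects, $c(p_1)=5$, $c(p_2)=7$, $c(q)=6$, $L=10$, where $v_1$ puts $\epsilon$ on $p_1$ and $1-\epsilon$ on $p_2$, and $v_2$ puts $1$ on $q$. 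Here $V'=\{v_1\}$, $\ell=5$, $P'=\{p_1,p_2\}$; in your notation $a=p_1$, $d=p_2$, and a single filler $q$ suffices. The supports are $5\epsilon$, $5-5\epsilon$, and $5$, so both the excess criterion and the excess-over-cost criterion eliminate $p_1$ first; its support moves to $p_2$, which is eliminated next (and $v_1$'s money is lost), then $q$ is eliminated; Reverse Eliminations then funds $q$ (cost $6\le 10$), leaving $4<5=c(p_1)$. Thus $P'\cap\mathcal{R}(E)=\emptyset$ while $c(p_1)+c(P'\cap\mathcal{R}(E))=5\le\ell$, violating Strong-PR. Until you supply and verify such an instance (one for each rule, or one covering both as here), the second half of the theorem is missing.
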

\begin{proof}
\textbf{EwT, EwTC}:
Consider an instance with $2$ voters, $v_1, v_2$ and $3$ projects, $p_1, p_2$, and $q$, such that $c(p_1) = 5$, $c(p_2) = 7$, and $c(q) = 6$. The budget limit is $L = 10$.  Voter $v_1$ puts $\epsilon$ to $p_1$ and $1-\epsilon$ to $p_2$. Voter $v_2$ puts $1$ to $q$. The support of the projects $p_1, p_2$, and $q$ will be respectively, $5\epsilon$, $5 - 5\epsilon$, and $5$. Thus, both EwT and EwTC will eliminate $p_1$ first, $p_2$ second, and $q$ last. Consequently, only $q$ will be selected while Strong-PJR requires selecting $p_1$.   

\textbf{MT, MTC}:
Fix $\ell \in [L]$, and consider a group of voters $V'$ with $|V'| \geq \ell n / L$, and a set $P' \subseteq P$ of projects, which are supported by all the voters from $V'$; furthermore, assume the voters from $V'$ do not support any other projects.

Let $S_i$ denote the set of projects selected by the rule (MT, or MTC) up to the $i$th iteration, inclusive.  
First, we observe that the following invariant holds: In each iteration $i$ the total support that the voters from $V'$ assign to the projects from $P' \setminus S_i$ equals at least $\ell - c(S_i \cap P')$. Indeed, the invariant is initially true (since $c(S_0 \cap P') = 0$ and $L/n \cdot |V'| \geq \ell$), and each time a project $p \in P'$ is selected, the total support that voters from $V'$ assign to projects is decreased by at most $c(p)$. Furthermore, the excess of the value that the voters from $V'$ assign to $p$ is redistributed only among the projects from $P'$. 

Now, for the sake of contradiction, assume there exists a project $p' \in P'$ such that  $c(p') + c(P' \cap \mathcal{R}(E)) \leq \ell$ and that has not been selected. Let $j$ be the last iteration before the rule reached the ``Inclusive Maximality Postprocedure'' phase (possibly $j$ is the last iteration of the rule). Then, clearly  $c(p') + c(P' \cap S_j) \leq \ell$. By our invariant, we get that in the $j$th iteration the total support the voters from $V'$ assign to the projects from $P' \setminus S_j$ equals at least:
\begin{align*}
\ell - c(S_j \cap P') \geq c(p') \text{ .}
\end{align*}
Furthermore, all the voters from $V'$ support $p'$, thus $p'$ is eligible by transfers. Consequently, the rule cannot stop nor reach the ``Inclusive Maximality Postprocedure'' phase. This gives a contradiction and completes the proof.
\end{proof}

\begin{proposition}
 Each aggregation rule that satisfies Strong-PR also satisfies PR. Each aggregation rule that satisfies PR also satisfies Weak-PR.
\end{proposition}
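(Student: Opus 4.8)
The plan is to prove both implications by direct comparison of the definitions; each follows almost immediately once the stronger axiom is instantiated with the same parameters appearing in the weaker one.

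For \textbf{PR implies Weak-PR}, I would fix a scenario $E = (P, V, c, L)$, an index $\ell \in [L]$, a voter set $V'$ with $|V'| \geq \ell n / L$, and a project set $P'$ with $c(P') \leq \ell$, exactly as in the statement of Weak-PR. Weak-PR only asks for the \emph{existence} of some scenario $E'$, differing from $E$ solely in the votes of $V'$, for which $P' \subseteq \mathcal{R}(E')$. I would construct $E'$ by redefining the vote of each $v' \in V'$ so that $v'$ supports all projects in $P'$ and none outside $P'$ (for instance, spreading the unit value over $P'$ proportionally to costs, as in the proof of Theorem~\ref{theorem:weakpr}), while leaving every voter in $V \setminus V'$ untouched. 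Then $E'$ satisfies the hypothesis of PR with the same $\ell$, $V'$, and $P'$, so PR yields $P' \subseteq \mathcal{R}(E')$, which is precisely the witness that Weak-PR demands.

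For \textbf{Strong-PR implies PR}, I would again fix $E$, $\ell$, a set $V'$ with $|V'| \geq \ell n / L$, and $P'$ with $c(P') \leq \ell$, and assume the PR hypothesis that every $v' \in V'$ supports exactly the projects of $P'$ and no others. This is verbatim the hypothesis of Strong-PR, so Strong-PR applies and gives one of two alternatives: either $P' \subseteq \mathcal{R}(E)$ (the desired conclusion), or for every $p \in P' \setminus \mathcal{R}(E)$ we have $c(p) + c(P' \cap \mathcal{R}(E)) > \ell$. I would rule out the second alternative: if some $p \in P' \setminus \mathcal{R}(E)$ existed, then since $p \notin \mathcal{R}(E)$ the singleton $\{p\}$ is disjoint from $P' \cap \mathcal{R}(E)$, and both are subsets of $P'$, whence
\begin{align*}
c(p) + c(P' \cap \mathcal{R}(E)) \leq c(P') \leq \ell \text{ ,}
\end{align*}
contradicting the strict inequality $c(p) + c(P' \cap \mathcal{R}(E)) > \ell$. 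Hence no such $p$ exists and $P' \subseteq \mathcal{R}(E)$.

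There is no substantive obstacle here: both parts are definitional unfoldings, and the proposition is essentially a sanity check that the three axioms are correctly ordered by strength. The only point requiring a moment of care is the cost bound in the second implication---recognizing that the extra hypothesis $c(P') \leq \ell$ present in PR (but absent in Strong-PR) is exactly what collapses Strong-PR's ``escape clause'' and forces the full inclusion of $P'$.
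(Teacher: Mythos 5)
Your proposal is correct and follows essentially the same route as the paper's proof: the same instantiation of Strong-PR's two alternatives with the cost bound $c(p) + c(P' \cap \mathcal{R}(E)) \leq c(P') \leq \ell$ collapsing the escape clause, and the same witness construction (votes concentrated on $P'$) for PR implying Weak-PR. The only cosmetic differences are that the paper argues the first implication contrapositively rather than directly, and uses uniform $1/|P'|$ supports instead of cost-proportional ones---both immaterial.
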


\begin{proof}
\textbf{Strong-PR $\to$ PR}:
Let $\calR$ be an aggregation method satisfying Strong-PR and, counterpositively, assume that $\calR$ fails PR. Since $\calR$ fails PR, then, by definition, there exists a budgeting scenario $E = (P, V, c, L)$, some $\ell \in [L]$, a set $V' \subseteq V$ of voters with $|V'| \geq \ell n / L$, a set $P' \subseteq P$ of projects with $c(P') \leq \ell$, and a project $p \in P'$ such that all voters $v' \in V'$ support all projects in $P'$, and not any other project, but $p \notin \mathcal{R}(E)$.

Now, since $\calR$ satisfies Strong-PR, then, by definition, for these specific $E$, $\ell$, $V'$, $P'$, and $p$, it holds that either:
\begin{enumerate}
\item
$P' \subseteq \mathcal{R}(E)$,
\item
for each $p \in P' \setminus \mathcal{R}(E)$ we have that $c(p) + c(P' \cap \mathcal{R}(E)) > \ell$.
\end{enumerate}

But, for these specific $E$, $\ell$, $V'$, $P'$, and $p$, we have that $c(P') \leq \ell$, thus the second condition does not hold (specifically, for each $p \in P' \calR(E)$ we have that $c(p) + c(P' \cap \calR(E)) \leq c(P') \leq \ell$).
Furthermore, the first condition does not hold as $p \notin \calR(E)$.
Hence, contradiction.

\textbf{PR $\to$ Weak-PR}:
Let $\calR$ be an aggregation method satisfying PR. Thus, by definition, for each budgeting scenario $E = (P, V, c, L)$, for each $\ell \in [L]$, each set $V' \subseteq V$ of voters with $|V'| \geq \ell n / L$, and each set $P' \subseteq P$ of projects with $c(P') \leq \ell$, it holds that:
    If all voters $v' \in V'$ support all projects in $P'$, and not any other project,
    then $P' \subseteq \mathcal{R}(E)$.

So, in particular, there exist possible votes for the voters in $V'$ such that $P' \subseteq \mathcal{R}(E)$; e.g., all voters $v' \in V'$ supporting all projects in $P'$, each with $1 / |P'|$ support, and not any other project.
Thus, Weak-PR is satisfied.
\end{proof}

Fain~et~al.~\shortcite{fainMunSha18:core_and_pb} studied another axiom, called \emph{core}, aimed at capturing proportionality for PB. They proved there exists no rule satisfying the core; to the best of our knowledge, our rules are the first that satisfy a natural weakening of the core.

\section{Experimental Evaluation}

\begin{table*}[t]
\caption{Statistics of Simulation Scenario 1 (left),  Simulation Scenario 2 (right) and real data from PB in Warsaw  (bottom).}
\label{table:statistics}
\begin{minipage}{0.49\textwidth}
\centering
\footnotesize
(a) Simulation Scenario 1
\medskip

{
\scriptsize
		\begin{tabular}{ c | c | c | c | c | c} 
			\toprule
			Rule & VS & *VS & suburbs & AR & AC \\
			\midrule
			target & 100\% & 100\% & 40\% & 0.0 & - \\
			\midrule
            GS  &  23.0 \% &  20.0\% & 10.9 \% & 25.4\% & 45k \\ 
            \midrule
            EwT  &  25.0 \% & 23.9\% & 21.6 \% & 2.1\% & 25 k \\
            MT &  22.6 \% & 22.2\% & 31.8 \% & 3.5\% & 24 k \\
            \midrule
            GSC  &  27.6 \% & 25.4\% & 15.2 \% & 11.0\% & 29 k \\
            EwTC & 25.9 \% & 24.1\% & 16.7 \% & 2.8\% & 27 k \\
            MTC  &  25.7 \% & 23.8\% & 22.9 \% & 2.9\% & 28 k \\
     
			\bottomrule
	\end{tabular}
}
\end{minipage}\hfill
\begin{minipage}{0.49\textwidth}
\centering
\footnotesize
(b) Simulation Scenario 2
\medskip

{
\scriptsize
			\begin{tabular}{ c | c | c | c | c | c} 
			\toprule
			Rule & VS & *VS & FoEP &  AR & AC \\
			\midrule
			target & 100\% & 100\% & 50\% & 0.0\% & - \\
			\midrule
            GS  &  21.2 \% & 18.9\% & 65\% & 27.1\% & 45 k \\
			\midrule
            EwT  &  24.4 \% & 23.1\% & 32\% & 12.1\% & 26 k \\
            MT  &  22.8 \% & 22.5\% & 22\% & 19.7\% & 25 k \\
			\midrule
            GSC  &  27.9 \% & 26.3\% & 7\% & 39.0\% & 25 k \\
            EwTC  &  24.5 \% & 22.7\% & 41\% & 8.9\% & 29 k \\
            MTC  &  24.3 \% & 22.9\% & 40\% & 11.2\% & 29 k \\

			\bottomrule
	\end{tabular}
}
\end{minipage}\hfill 
\medskip
\medskip

\begin{minipage}{0.99\textwidth}
\centering
\footnotesize
(c) Warsaw Instance
\medskip

{
\scriptsize
		\begin{tabular}{ c | c | c | c} 
			\toprule
			Rule & VS & AR & AC \\
			\midrule
			WM  &  66\% & 5.1\% & 860 k \\
			\midrule
            GS  &  67\% & 4.6\% & 804 k \\
            \midrule
            EwT  &  80\% & 2.6\% & 295 k \\
            MT  &  80\% & 2.5\% & 294 k \\
            \midrule
            GSC  &  81\% & 2.7\% & 324 k \\
            EwTC  &  81\% & 2.8\% & 319 k \\
            MTC  &  81\% & 2.7\% & 319 k \\
			\bottomrule
	\end{tabular}
}
\end{minipage}
\end{table*}


To complement the axiomatic analysis provided in the section above, in this section we compare our six aggregation methods through computer-based simulations. In particular, we report on simulations done on synthetic datasets as well as on data collected from real-life PB elections.

In our synthetic simulations we randomly generate instances of PB using the \linebreak $2$-dimensional Euclidean model~\cite{enelow1984spatial,mw2d}:
  We associate each voter and each project with a point in a $2$-dimensional Euclidean space; we refer to this point as the \emph{ideal point} of the voter/project. Then, the cumulative ballot of each voter $v_j$ is generated as follows:
  We first identify the set $S_j$ containing the $10$ projects whose ideal points are the closest to the one of $v$.\footnote{As we use real numbers, ties are not an issue.}
Then, we let $v_j$ support only the projects from $S_j$, where, for $p \in S_j$, we set amount by which $v_j$ supports $p$ to be inversely proportional to the distance between the ideal point of $p$ and the ideal point of $v$; specifically: 
  \begin{equation}\label{equation:simulationcosts}
    v_j(p) = \sum_{j: p_i \in S_j} \frac{dist(p_i, v_j(i))}{\sum_{p_k \in S_j} dist(p_k, v_j(i))}\ .
  \end{equation}
  
  We consider two scenarios specifying the distributions of ideal points for voters/projects and of the costs of the projects:

\begin{description}

\item[Simulation Scenario 1:]
This simulation mimics a situation in a municipality, containing a city center and several suburbs around it. The ideal points are drawn as follows:
  Each point has a probability of $0.6$ to fall in a uniform center disc of radius $0.25$ (representing the city center) and a probability of $0.05$ to fall uniformly in one of the eight outer discs of radius $0.06$ each (representing suburbs). 
The cost of each project is selected from a Gaussian with $\mu$ = 50k, $\sigma$ = 20k.

\item[Simulation Scenario 2:]
The purpose of this simulation is to better understand how the considered rules treat projects with different costs. We consider a city containing two centers, one with expensive projects and another one with cheap projects. The ideal points are drawn as follows:
  We have two uniform discs, each with radius 0.25. In the left (right) disc the cost of each project is selected from a Gaussian with $\mu$ = 30k, $\sigma$ = 10k (respectively, $\mu$ = 70k, $\sigma$ = 10k).

\end{description}

\noindent
For each of the two synthetic scenarios, we generate $1000$ PB instances and compute bundle returned by our rules.
We also consider data from the following real instances:

\begin{description}

\item[Warsaw Instance:]
In 2019, 86721 people voted for 101 projects in Warsaw's municipal Participatory Budgeting. Each voter could cast an approval ballot and vote for up to 10 projects. The budget limit was close to PLN 25M ($\approx \$ 6.5$M). The selection rule used there was a simple greedy algorithm, which we refer to as the Warsaw method (WM); WM is equivalent to GS, except that WM takes approval ballots while GS takes cumulative ballots. 
\end{description}


%
In Table~\ref{table:statistics}, we provide the following statistics (for the synthetic data the values are averages over all 1000 instances):


\begin{description}

\item[Voter Satisfaction (VS):]
  fraction of support of a voter which went on funded projects (formally, for a winning bundle $B$, the voter satisfaction of voter $v$ is $\sum_{p \in B} v(p)$).

\item[Anger Ratio (AR):] the fraction of voters who are ignored in the election (formally, $|\{v : \sum_{p \in B} v(p) = 0\}| / |V|$).
  
\item[Average Cost (AC):] average cost of a funded project.

\end{description}

\noindent 
Furthermore, for Scenarios 1 and 2 we compute the $^*$VS metric:

\begin{description}
\item[Voter Satisfaction with Approval Votes ($^*$VS):]
 instead of using Equation~\eqref{equation:simulationcosts}, for each $v_j \in V$ and $p \in S_j$ we set $v_j(p) = 0.1$. This corresponds to using approval ballots (a voter supports each of her supported projects equally).
\end{description}

Finally, for Scenario 1 we include the fraction of the budget spent on projects from the suburbs, 
and for Scenario 2---the fraction spent on projects from the expensive bundle (FoEP).

Our synthetic simulations show that greedy rules produce large anger ratio compared to CSTV rules; also, greedy rules do not satisfy PR (see Table~\ref{table:axioms}), which makes them less attractive for PB. We observe that MT is more fair to smaller districts than EwT and GS, and that MTC is more fair than GSC and EwTC. At the same time, 
we infer that MTC, EwTC, and GS do not discriminate expensive projects as much as the other rules. These results suggest MTC as a particularly good rule, which is only outperformed by MT with respect to how it treats smaller districts. On the other hand, MTC performs considerably better than MT with respect to treating expensive projects, as well as with respect to VS and AR criteria. 

Our simulations based on the Warsaw Instance dataset show that all our methods, with the exception of GS, give better results than WM, the method that is currently used in Warsaw. Bundles selected by our algorithms contribute to larger Voter Satisfaction and lower Anger Ratio. 

\begin{figure}[t]    \label{fig:avg_vs}
    \centering
    \includegraphics[width=7cm]{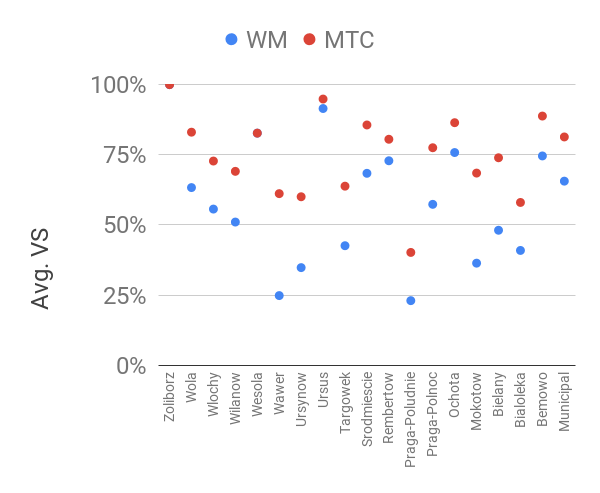}
    \includegraphics[width=7cm]{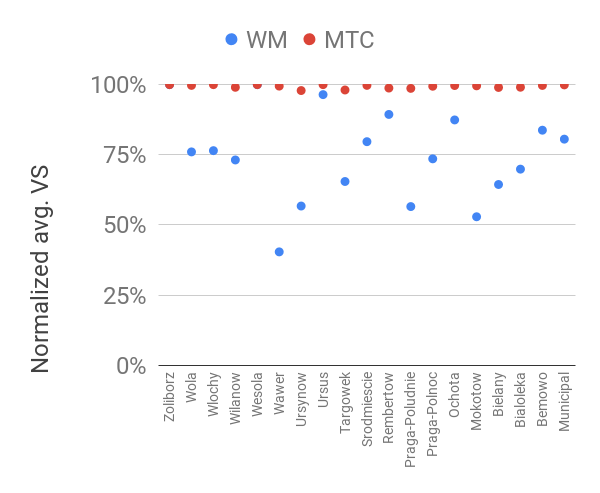}
    \caption{Comparison of WM and MTC. The left (right) picture shows average VS values (average normalized VS values; i.e., percentage of maximum possible VS). On the horizontal axis there are names of 18 local PB instances and one municipal PB instance (the most right one).}
    \label{fig:avg_vs}
\end{figure}

We also compared our methods on 18 additional local PBs in Warsaw. In each of 18 districts there was a separate local PB. The same algorithm was used as in the municipal PB, but each voter could select up to 15 projects. The results are consistent with those presented in Table~\ref{table:statistics}. Moreover, in Figure~\ref{fig:avg_vs} we present the comparison of Warsaw method and MTC. We can see that according to the utilitarian criterion (VS), MTC performs considerably better than the solution that is currently used.

\section{Conclusion}

We suggested the use of cumulative ballots for participatory budgeting and considered several aggregation methods for this setting.
First, our results, in particular the preliminary experimental results presented in Table~\ref{table:statistics}, show that cumulative ballots indeed can improve voter satisfaction.
Moreover, using both our theoretical results and preliminary experimental results, we identified the MTC rule that, we argue, shall be given serious consideration for being used in practical settings, for the following reasons:
\begin{itemize}

\item MTC satisfies Strong-PR, hence is guaranteed to behave in a very proportional way, in particular not dismiss minorities -- in contrast with, e.g., the current method usually used in practice;

\item MTC is computationally efficient, as can be seen by the procedural description of MTC and validated using our simulations;

\item MTC behaves well with respect to our synthetic simulations and in 
real-world data, in particular, it outperforms the method usually used in practice both by increasing the total satisfaction (while still being proportional) as well as by decreasing the frustration rate.

\end{itemize}

We discuss several avenues for future research, regarding applying CSTV in generalized settings.
Indeed, while here we considered the standard combinatorial PB setting, it seems that cumulative ballots and our CSTV rule can be naturally extended to allow for more complex ballots, thus enabling greater voter flexibility that hopefully could result in even better outcomes.
Specifically, in future work we plan to consider CSTV for the following settings:

\paragraph{Negative utilities}
Settings in which voters can express negative utilities -- here, voters could not only state how they wish to split their virtual coin among the projects, but also specify, for each project, whether the support they give to this project is positive or negative.

\paragraph{PB with several resource types}
Settings in which we have not only one type of resource but several, say time and money -- here, voters could get several virtual coins to split simoultaneously;

\paragraph{PB with project interactions}
Settings in which there are interactions between projects, such as in the model of Jain et al.~\cite{jain2020participatory}, assuming a \emph{substitution structure} that is a partition over the projects -- here, in addition to their cumulative ballots, voters could state their preferences regarding substitutions and complementarities between projects; furthermore, as our CSTV rules work by support redistribution on behalf of the voters, we might allow voters to explicitly state how they wish such redistribution to happen.

\section*{Acknowledgements}

Arkadii Slinko was supported by the Faculty Development Research Fund 3719899 of the University of Auckland.
Piotr Skowron was supported by Poland's National Science Center grant UMO-2019/35/B/ST6/02215.
Nimrod Talmon was supported by the Israel Science Foundation (ISF; Grant No. 630/19).

\bibliographystyle{plain}
\bibliography{bib}

\begin{thebibliography}{10}

\bibitem{aussieone}
H.~Aziz, B.~Lee, and N.~Talmon.
\newblock Proportionally representative participatory budgeting: Axioms and
  algorithms.
\newblock In {\em Proceedings of AAMAS-18}, pages 23--31, 2018.

\bibitem{aziz2019proportionally}
Haris Aziz and Barton~E Lee.
\newblock Proportionally representative participatory budgeting with ordinal
  preferences.
\newblock {\em arXiv preprint arXiv:1911.00864}, 2019.

\bibitem{azizsurvey}
Haris Aziz and Nisarg Shah.
\newblock Participatory budgeting: Models and approaches.
\newblock {\em arXiv preprint arXiv:2003.00606}, 2020.

\bibitem{benade2017preference}
G.~Benade, S.~Nath, A.~Procaccia, and N.~Shah.
\newblock Preference elicitation for participatory budgeting.
\newblock In {\em Proceedings of AAAI-17}, pages 376--382, 2017.

\bibitem{bentert2020comparing}
Matthias Bentert and Piotr Skowron.
\newblock Comparing election methods where each voter ranks only few
  candidates.
\newblock In {\em AAAI}, pages 2218--2225, 2020.

\bibitem{bhagat1984cumulative}
S.~Bhagat and J.~A. Brickley.
\newblock Cumulative voting: The value of minority shareholder voting rights.
\newblock {\em The Journal of Law and Economics}, 27(2):339--365, 1984.

\bibitem{participatoryBudgeting}
Yves Cabannes.
\newblock Participatory budgeting: a significant contribution to participatory
  democracy.
\newblock {\em Environment and Urbanization}, 16(1):27--46, 2004.

\bibitem{cole1949legal}
Arthur~T Cole~Jr.
\newblock Legal and mathematical aspects of cumulative voting.
\newblock {\em SCLQ}, 2:225, 1949.

\bibitem{mw2d}
E.~Elkind, P.~Faliszewski, J.-F. Laslier, P.~Skowron, A.~Slinko, and N.~Talmon.
\newblock What do multiwinner voting rules do? an experiment over the
  two-dimensional {E}uclidean domain.
\newblock In {\em Proceedings of AAAI-17}, 2017.

\bibitem{elk-fal-sko-sli:j:multiwinner-properties}
E.~Elkind, P.~Faliszewski, P.~Skowron, and A.~Slinko.
\newblock Properties of multiwinner voting rules.
\newblock {\em Social Choice and Welfare}, 48(3):599--632, 2017.

\bibitem{enelow1984spatial}
J.~Enelow and M.~Hinich.
\newblock The spatial theory of voting: An introduction, 1984.

\bibitem{FainGM16}
B.~Fain, A.~Goel, and K.~Munagala.
\newblock The core of the participatory budgeting problem.
\newblock In {\em Proceedings of WINE-16}, pages 384--399, 2016.

\bibitem{fainMunSha18:core_and_pb}
Brandon Fain, Kamesh Munagala, and Nisarg Shah.
\newblock Fair allocation of indivisible public goods.
\newblock In {\em Proceedings of EC-18}, pages 575--592, 2018.

\bibitem{mwchapter}
P.~Faliszewski, P.~Skowron, A.~Slinko, and N.~Talmon.
\newblock Multiwinner voting: {A} new challenge for social choice theory.
\newblock In U.~Endriss, editor, {\em Trends in Computational Social Choice}.
  AI Access Foundation, 2017.

\bibitem{abpb}
P.~Faliszewski and N.~Talmon.
\newblock A framework for approval-based budgeting methods.
\newblock In {\em AAAI-19}, 2019.

\bibitem{fluSkoTruWil:nash_complexity}
Till Fluschnik, Piotr Skowron, Mervin Triphaus, and Kai Wilker.
\newblock Fair knapsack.
\newblock In {\em Proceedings of AAAI-19}, 2019.

\bibitem{ford2020liquid}
Bryan Ford.
\newblock A liquid perspective on democratic choice.
\newblock {\em arXiv preprint arXiv:2003.12393}, 2020.

\bibitem{dominiktruthful}
Rupert Freeman, David~M Pennock, Dominik Peters, and Jennifer Wortman~Vaughan.
\newblock Truthful aggregation of budget proposals.
\newblock In {\em Proceedings of the 2019 ACM Conference on Economics and
  Computation}, pages 751--752, 2019.

\bibitem{goel2019knapsack}
Ashish Goel, Anilesh~K Krishnaswamy, Sukolsak Sakshuwong, and Tanja Aitamurto.
\newblock Knapsack voting for participatory budgeting.
\newblock {\em ACM Transactions on Economics and Computation (TEAC)},
  7(2):1--27, 2019.

\bibitem{jain2020participatory}
Pallavi Jain, Krzysztof Sornat, and Nimrod Talmon.
\newblock Participatory budgeting with project interactions.
\newblock In {\em Proceedings of IJCAI '20}, 2020.

\bibitem{klamler2012committee}
Christian Klamler, Ulrich Pferschy, and Stefan Ruzika.
\newblock Committee selection under weight constraints.
\newblock {\em Mathematical Social Sciences}, 64(1):48--56, 2012.

\bibitem{laruelle2020voting}
Annick Laruelle.
\newblock Voting to select projects in participatory budgeting.
\newblock {\em European Journal of Operational Research}, 2020.

\bibitem{lu2011budgeted}
Tyler Lu and Craig Boutilier.
\newblock Budgeted social choice: From consensus to personalized decision
  making.
\newblock In {\em Twenty-Second International Joint Conference on Artificial
  Intelligence}, 2011.

\bibitem{mills1968mathematics}
L.~Mills.
\newblock The mathematics of cumulative voting.
\newblock {\em Duke Law Journal}, page~28, 1968.

\bibitem{owaWinner}
P.~Skowron, P.~Faliszewski, and J.~Lang.
\newblock Finding a collective set of items: From proportional
  multirepresentation to group recommendation.
\newblock {\em Artificial Intelligence}, 241:191--216, 2016.

\bibitem{tid-ric:j:stv}
N.~Tideman and D.~Richardson.
\newblock Better voting methods through technology: {T}he
  refinement-manageability trade-off in the {S}ingle {T}ransferable {V}ote.
\newblock {\em Public Choice}, 103(1--2):13--34, 2000.

\bibitem{vengroff2003electoral}
R.~Vengroff.
\newblock Electoral reform and minority representation.
\newblock {\em Perspectives on Political Science}, 32(3):166, 2003.

\bibitem{wampler2012participatory}
Brian Wampler.
\newblock Participatory budgeting: Core principles and key impacts.
\newblock {\em Journal of Public Deliberation}, 2012.

\end{thebibliography}

\end{document}